\numberwithin{equation}{section}
\theoremstyle{plain}
\newtheorem{lemma}{Lemma}[section]
\newtheorem{theorem}{Theorem}[section]
\theoremstyle{definition}
\newtheorem{remark}{Remark}
\newcommand{\dnorm}{\mathcal{N}}
\newcommand{\cx}{\mathcal X}
\newcommand{\cy}{\mathcal Y}
\newcommand{\e}{\mathbb{E}}
\newcommand{\ssp}{\mathrm{S}}
\newcommand{\brp}{\mathrm{B}}
\newcommand{\wh}{\widehat}
\newcommand{\tran}{\mathsf{T}}
\newcommand{\err}{\varepsilon}
\newcommand{\diag}{\mathrm{diag}}
\newcommand{\df}{\mathrm{df}}
\newcommand{\tr}{\mathrm{tr}}
\newcommand{\wt}{\widetilde}
\newcommand{\var}{\mathrm{var}}
\newcommand{\cov}{\mathrm{cov}}
\newcommand{\mse}{\mathrm{MSE}}
\newcommand{\orcl}{\mathrm{orcl}}
\newcommand{\pool}{\mathrm{pool}}
\newcommand{\plug}{\mathrm{plug}}
\newcommand{\bapi}{\mathrm{bapi}}
\newcommand{\cv}{\mathrm{cv}}
\newcommand{\real}{\mathbb{R}}
\newcommand{\simiid}{\stackrel{\text{iid}}{\sim}}
\newcommand{\rd}{\,\mathrm{d}}
\newcommand{\sign}{\mathrm{sign}}
\newcommand{\onedc}{1_{|D|\ge c}}
\newcommand{\js}{\mathrm{JS}}
\begin{document}

\begin{frontmatter}
\title{Data enriched linear regression}
\runtitle{Data enriched regression}

\begin{aug}
\author{\fnms{Aiyou} \snm{Chen}\thanksref{m1}\ead[label=e1]{aiyouchen@google.com}},
\author{\fnms{Art B.} \snm{Owen}\thanksref{m2}\thanksref{t1}\ead[label=e2]{owen@stanford.edu}}
\and
\author{\fnms{Minghui} \snm{Shi}\thanksref{m1}
\ead[label=e3]{mshi@google.com}
\ead[label=u1,url]{http://stat.stanford.edu/$\sim$owen}}

\thankstext{t1}{Art Owen worked on this project
as a consultant for Google; it was not part of
his Stanford responsibilities.}
\runauthor{A. Chen et al.}

\affiliation{Google Inc.\thanksmark{m1}, Stanford University\thanksmark{m2}}

\address{Google Inc.\\
1600 Amphitheatre Pkwy\\
Mountain View, CA 94303\\
\printead{e1,e3}}

\address{Department of Statistics\\
Sequoia Hall\\
Stanford, CA 94305\\
\printead{e2}\\
\printead{u1}}
\end{aug}

\begin{abstract}
We present a linear regression method for predictions
on a small data set making use of a second possibly
biased data set that may be much larger.  Our method
fits linear regressions to the two data sets while penalizing
the difference between predictions made by those two models.
The resulting algorithm is a shrinkage method similar
to those used in small area estimation. 
We find a Stein-type finding for Gaussian responses: 
when the model has $5$ or more coefficients and $10$
or more error degrees of freedom, it becomes inadmissible
to use only the small data set, no matter how large the
bias is.  
We also present both plug-in and AICc-based methods
to tune our penalty parameter. Most of our results use
an $L_2$ penalty, but we obtain formulas for $L_1$
penalized estimates when the model is specialized to the
location setting.
Ordinary Stein shrinkage provides an inadmissibility
result for only $3$ or more coefficients, but we find that
our shrinkage method typically produces much lower squared errors in
as few as $5$ or $10$ dimensions when the bias is small
and essentially equivalent squared errors when the bias is large.
\end{abstract}

\begin{keyword}[class=MSC]
\kwd[Primary ]{62J07}
\kwd{62D05}
\kwd[; secondary ]{62F12}
\end{keyword}

\begin{keyword}
\kwd{Data fusion}
\kwd{Stein shrinkage}
\kwd{Transfer learning}
\end{keyword}

\end{frontmatter}
\section{Introduction}
The problem we consider here is how to combine
linear regressions based on 
data from two sources.  There is a small data set
of expensive high quality observations and a possibly much larger
data set with less costly observations.
The big data set is thought to have similar
but not identical statistical characteristics
to the small one.  The conditional expectation might
be different there or the predictor variables might
have been measured in somewhat different ways.
The motivating application comes from within Google.
The small data set is a panel of
consumers, selected by a probability sample,
who are paid to share their internet viewing data
along with other data on television viewing.
There is a second and potentially much larger panel, not selected
by a probability sample who have  opted in
to the data collection process.

The goal is to make predictions for the population from which the smaller 
sample was drawn. If the data are identically
distributed in both samples, we should simply
pool them.  If the big data set is completely
different from the small one, then using it
may not be worth the trouble.

Many settings are intermediate
between these extremes: the big data set is similar
but not necessarily identical to the small one.
We stand to benefit from using the big data set
at the risk of introducing some bias.  
Our goal is to glean some information from
the larger data set to increase accuracy
for the smaller one.
The difficulty is that our best information about
how the two populations are similar is in our
samples from them.

The motivating problem at Google has some differences
from the problem we consider here.  
There were two binary responses, one sample was missing
one of those responses, and tree-based predictions were used.
See \cite{chen:koeh:owen:remy:shi:2014}.
This paper studies linear regression because it is more amenable to theoretical
analysis, is more fundamental, and sharper statements are possible.

The linear regression method we use is a hybrid
between simply pooling the two data sets and fitting
separate models to them. As explained in more
detail below, we apply
shrinkage methods penalizing the difference between
the regression coefficients for the two data sets.
Both the specific penalties we use, and our
tuning strategies, reflect our greater
interest in the small data set. 
Our goal is to enrich the analysis of the smaller
data set using possibly biased data from the larger
one.

Section~\ref{sec:nota} presents our notation
and introduces $L_1$ and $L_2$ penalties
on the parameter difference. Most of our results
are for the $L_2$ penalty.
For the $L_2$ penalty, the resulting estimate is a
linear combination of the two within sample  estimates.
Theorem~\ref{thm:df} gives a formula for the degrees of
freedom of that estimate.
Theorem~\ref{thm:pmse} presents the mean squared error
of the estimator and forms the basis for plug-in
estimation of an oracle's value when an
$L_2$ penalty is used.
We also show how to use Stein shrinkage, shrinking the
regression coefficient in the small sample towards the
estimate from the large sample. Such shrinkage makes
it inadmissible to ignore the large sample when
there are $3$ or more coefficients including the intercept.

Section~\ref{sec:mean} considers in detail
the case where the regression simplifies to
estimation of a population mean.
In that setting, we can determine how plug-in,
bootstrap and cross-validation estimates of
tuning parameters behave.  We get an expression
for how much information the large sample can add.
Theorem~\ref{thm:l1closed} gives a soft-thresholding
expression for the estimate produced by $L_1$ penalization
and equation~\eqref{eq:l1loss} can be used to find
the penalty parameter that an $L_1$ oracle would choose
when the data are Gaussian.

Section~\ref{sec:simu} presents some simulated
examples.  We simulate the location problem 
for several $L_2$ penalty methods
varying in how aggressively they use the larger sample. 
The $L_1$ oracle is outperformed by the $L_2$ oracle in this setting.
When the bias is small, 
the data enrichment methods improve upon the small
sample, but when the bias is large then it is
best to use the small sample only.
Things change when we simulate the regression model.
For dimension $d\ge 5$, data enrichment outperforms
the small sample method in our simulations at all bias levels.
We did not see such an inadmissibility outcome
when we simulated cases with $d\le 4$.
In our simulated examples, the data enrichment estimator
performs better than plain Stein shrinkage of the small
sample towards the large sample. 

Section~\ref{sec:prop} presents theoretical support for
our estimator.
Theorem~\ref{thm:plug-in}.
shows that when there are $5$ or more predictors and $10$
or more degrees of freedom for error, then
some of our data enrichment estimators make simply
using the small sample inadmissible.
The reduction in mean squared error is greatest when the bias
is smallest, but no matter how large the bias
is, we gain an improvement. 
The estimator we study employs a data-driven
weighting of the two within-sample least squares estimators.
In simulations, our plug-in estimator performed even better than the
estimator from Theorem~\ref{thm:plug-in}.
Section~\ref{sec:moracle} explains how our estimators are closer
to a matrix oracle than the James-Stein estimators are, and this
may explain why they outperform simple shrinkage in our simulations.



There are many statistical settings where data from one
setting is used to study a different setting. They
range from older methods in survey sampling, to recently
developed methods for bioinformatics.
Section~\ref{sec:lite}
surveys some of those literatures.
Section~\ref{sec:conc} has brief conclusions.
The longer proofs are  in Section~\ref{sec:proofs}
of the Appendix. 

Our contributions include the following:
\begin{compactitem}
\item a new penalization method for combining data sets,
\item an inadmissibility result based on that method, 
\item a comparison of $L_1$ and $L_2$ penalty oracles for the location setting, and
\item evidence that more aggressive shrinkage pays in high dimensions.
\end{compactitem}

\section{Data enriched regression}\label{sec:nota}

Consider linear regression with a response $Y\in \real$
and predictors $X\in\real^d$.  The model for the small
data set is
$$
Y_i = X_i\beta + \err_i,\quad i\in S
$$
for a parameter $\beta\in\real^d$ and 
independent errors $\err_i$ with mean $0$
and variance $\sigma_S^2$.
Now suppose that the data in the big data
set follow
$$
Y_i = X_i(\beta+\gamma) + \err_i,\quad i\in B
$$
where $\gamma\in\real^d$ is a bias parameter
and $\err_i$ are independent with mean $0$
and variance $\sigma^2_B$.
The sample sizes are $n$ in the small sample
and $N$ in the big sample.

There are several kinds of departures of interest.
It could be, for instance, that the overall
level of $Y$ is different in $S$ than in $B$
but that the trends are similar.
That is, perhaps only the intercept component of $\gamma$
is nonzero. More generally, the effects of some but
not all of the components in $X$ may differ
in the two samples.  
One could apply hypothesis testing to each component of $\gamma$
but that is unattractive as the number of
scenarios to test for grows as~$2^d$.

Let $X_S\in\real^{n\times d}$  and $X_B\in\real^{N\times d}$ 
have rows made of vectors $X_i$ for
$i\in S$ and $i\in B$ respectively.
Similarly, let $Y_S\in\real^n$ and $Y_B\in\real^N$ be 
corresponding vectors of response values.
We use $V_S = X_S^\tran X_S$ and $V_B=X_B^\tran X_B$.

\subsection{Partial pooling via shrinkage and weighting}

Our primary approach is to pool the data but put
a shrinkage penalty on $\gamma$.  We estimate
$\beta$ and $\gamma$ by minimizing
\begin{align}\label{eq:penss}
\sum_{i\in S}(Y_i - X_i\beta)^2
+ \sum_{i\in B}(Y_i - X_i(\beta+\gamma))^2
+ \lambda P(\gamma)
\end{align}
where $\lambda\in[0,\infty]$
and $P(\gamma)\ge 0$ is a penalty function.
There are several reasonable choices for the
penalty function, including
$$
\Vert\gamma\Vert_2^2,\quad
\Vert X_S\gamma\Vert_2^2,\quad
\Vert\gamma\Vert_1,\quad\text{and}\quad
\Vert X_S\gamma\Vert_1.
$$
For each of these penalties, setting $\lambda=0$
leads to separate fits $\hat\beta$ and
$\hat\beta+\hat\gamma$ in the two data sets.
Similarly, taking $\lambda=\infty$ constrains
$\hat\gamma=0$ and amounts to pooling the samples.
We will see that varying $\lambda$ shifts the relative
weight applied to the two samples.
In many applications one will want to regularize $\beta$
as well, but in this paper we only penalize $\gamma$.

The criterion~\eqref{eq:penss} does not account
for different variances in the two samples.  If
we knew the variance ratio we might multiply the
sum over $i\in B$ by $\tau = \sigma^2_S/\sigma^2_B$.
A large value of $\tau$ has the consequence of increasing
the weight on the $B$ sample.  Choosing $\tau$ is 
largely confounded with choosing $\lambda$ because
$\lambda$ also adjusts the relative weight of the two samples.
We use $\tau=1$, but in a context with some prior knowledge
about the magnitude of $\sigma^2_S/\sigma^2_B$ another value could be
used. Our inadmissibility result does not depend on knowing
the correct $\tau$.

The $L_1$ penalties have an advantage in interpretation
because they identify which parameters or which 
specific observations might be differentially affected.
The quadratic penalties are more analytically tractable, so we
focus most of this paper on them.

Both quadratic penalties can be expressed
as $\Vert X_T\gamma\Vert_2^2$ for
a matrix $X_T$.  The rows of $X_T$ represent
a hypothetical target population of $N_T$ items for prediction.
The matrix $V_T=X_T^\tran X_T$ is then proportional
to the matrix of mean squares and mean cross-products for predictors
in the target population.

If we want to remove the pooling
effect from one of the coefficients,
such as the intercept term, then the
corresponding column of $X_T$ should contain
all zeros. We can also constrain $\gamma_j=0$
(by dropping its corresponding predictor)
in order to enforce exact pooling on the $j$'th coefficient.

A second, closely related approach is to
fit $\hat\beta_S$ by minimizing
$\sum_{i\in S}(Y_i - X_i\beta)^2$,
fit $\hat\beta_B$ by minimizing
$\sum_{i\in B}(Y_i - X_i\beta)^2$,
and then estimate $\beta$ by
$$
\hat\beta(\omega)=\omega \hat\beta_S + (1-\omega)\hat\beta_B
$$
for some $0\le\omega\le 1$.
In some special cases the estimates indexed
by the weighting parameter $\omega\in[n/(n+N),1]$ 
are a relabeling of the penalty-based estimates
indexed by the parameter $\lambda\in[0,\infty]$. In other
cases, the two families of estimates differ. The weighting
approach allows simpler tuning methods. Although
we find in simulations that the penalization method is
superior, we can prove stronger results about
the weighting approach.

Given two values of $\lambda$
we consider the larger one to be more `aggressive'
in that it makes more use of the big sample
bringing with it the risk of more bias
in return for a variance reduction.
Similarly, aggressive estimators correspond
to small weights $\omega$ on the small target sample.

\subsection{Quadratic penalties and degrees of freedom}

The quadratic penalty takes the form
$ P(\gamma) = \Vert X_T\gamma\Vert^2_2
=\gamma^\tran V_T\gamma$
for a matrix $X_T\in\real^{r\times d}$
and $V_T = X_T^\tran X_T\in\real^{d\times d}$.
The value $r$ is $d$ or $n$ in the examples
above and could take other values in different
contexts.
Our criterion becomes
\begin{align}\label{eq:quadpenalized}
\Vert Y_S - X_S\beta\Vert^2
+ \Vert Y_B - X_B(\beta+\gamma)\Vert^2
+ \lambda \Vert X_T\gamma\Vert^2.
\end{align}
Here and below $\Vert x\Vert$ means the Euclidean norm $\Vert x\Vert_2$.

Given the penalty matrix $X_T$ and a value for $\lambda$,
the penalized sum of squares~\eqref{eq:quadpenalized} is minimized by
$\hat\beta_\lambda$ and $\hat\gamma_\lambda$ satisfying
$$
\cx^\tran\cx
\begin{pmatrix}
\hat\beta_\lambda\\
\hat\gamma_\lambda
\end{pmatrix}
= 
\cx^\tran\cy
$$
where
\begin{align}\label{eq:defXY}
\cx=
\begin{pmatrix}
X_\ssp & 0\\
X_\brp & X_\brp\\
0 & \lambda^{1/2}X_T
\end{pmatrix}\in\real^{(n+N+r)\times 2d},\quad\text{and}\quad
\cy=
\begin{pmatrix}
Y_\ssp \\
Y_\brp \\
0  
\end{pmatrix}.
\end{align}

To avoid uninteresting complications we suppose that the
matrix $\cx^\tran\cx$ is invertible. 
The representation~\eqref{eq:defXY} also underlies
a convenient computational approach to fitting
$\hat\beta_\lambda$ and $\hat\gamma_\lambda$
using $r$ rows of pseudo-data just as one does
in ridge regression.

The estimate $\hat\beta_\lambda$ can be
written in terms of $\hat\beta_S= V_S^{-1}X_S^\tran Y_S$
and $\hat\beta_B=V_B^{-1}X_B^\tran Y_B$ as
the next lemma shows.

\begin{lemma}\label{lem:matrixweights}
Let $X_S$, $X_B$, and $X_T$ in~\eqref{eq:quadpenalized}
all have rank $d$. Then for any $\lambda\ge 0$,
the minimizers $\hat\beta$ and $\hat\gamma$ of~\eqref{eq:quadpenalized}
satisfy
$$\hat\beta =  W_\lambda\hat\beta_S + (I-W_\lambda)\hat\beta_B$$
and $\hat\gamma = (V_B+\lambda V_T)^{-1}V_B(\hat\beta_B-\hat\beta)$ for a matrix
\begin{align}\label{eq:matrixwts}
W_\lambda = (V_S+\lambda V_TV_B^{-1}V_S+\lambda V_T)^{-1}(V_S+\lambda V_TV_B^{-1}V_S).
\end{align}
If $V_T=V_S$, then
$$W_\lambda = (V_B+\lambda V_S+\lambda V_B)^{-1}(V_B+\lambda V_S).$$
\end{lemma}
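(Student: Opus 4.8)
The plan is to derive the normal equations from the block structure of $\cx^\tran\cx$ and then solve the resulting coupled linear system for $\hat\beta$ by eliminating $\hat\gamma$. Writing out $\cx^\tran\cx$ using the definition in~\eqref{eq:defXY}, I obtain the $2\times 2$ block system
\begin{align*}
(V_S+V_B)\hat\beta + V_B\hat\gamma &= X_S^\tran Y_S + X_B^\tran Y_B,\\
V_B\hat\beta + (V_B+\lambda V_T)\hat\gamma &= X_B^\tran Y_B.
\end{align*}
Here I have used $V_S=X_S^\tran X_S$, $V_B=X_B^\tran X_B$, and $V_T=X_T^\tran X_T$, together with the fact that the third block row of $\cy$ is zero so it contributes nothing to the right-hand side. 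The rank-$d$ hypotheses guarantee that $V_S$, $V_B$, and $V_B+\lambda V_T$ are all invertible, which is what lets the elimination go through.

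The key step is to solve the second equation for $\hat\gamma$, giving $\hat\gamma = (V_B+\lambda V_T)^{-1}(X_B^\tran Y_B - V_B\hat\beta)$, and substitute into the first. Recognizing $X_B^\tran Y_B = V_B\hat\beta_B$ and $X_S^\tran Y_S = V_S\hat\beta_S$, I would rewrite $X_B^\tran Y_B - V_B\hat\beta = V_B(\hat\beta_B-\hat\beta)$, which immediately yields the stated formula $\hat\gamma = (V_B+\lambda V_T)^{-1}V_B(\hat\beta_B-\hat\beta)$ once $\hat\beta$ is known. Substituting this expression back into the first normal equation and collecting the terms in $\hat\beta$ produces an equation of the form $M\hat\beta = V_S\hat\beta_S + (\text{correction})V_B\hat\beta_B$; the plan is to massage it into the convex-combination form $\hat\beta = W_\lambda\hat\beta_S + (I-W_\lambda)\hat\beta_B$.

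The main obstacle is the algebraic bookkeeping needed to show that the coefficient matrices of $\hat\beta_S$ and $\hat\beta_B$ are exactly $W_\lambda$ and $I-W_\lambda$ for the $W_\lambda$ given in~\eqref{eq:matrixwts}, and in particular that these two matrices sum to the identity. After substitution, the term $V_B\hat\beta$ appears multiplied by $(V_B+\lambda V_T)^{-1}$, so I expect a factor like $V_B - V_B(V_B+\lambda V_T)^{-1}V_B = \lambda V_B(V_B+\lambda V_T)^{-1}V_T$ to emerge from a resolvent-type identity; simplifying this cleanly, and factoring the result so that the common left factor matches the inverse in~\eqref{eq:matrixwts}, is the delicate part. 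I would verify the claimed $W_\lambda$ by checking that $M^{-1}$ times the $\hat\beta_S$-coefficient equals it, then obtain $I-W_\lambda$ for free since both coefficient matrices share the same left inverse factor and their bracketed sum telescopes to $V_S+\lambda V_TV_B^{-1}V_S+\lambda V_T$.

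Finally, for the specialization $V_T=V_S$, I would substitute into~\eqref{eq:matrixwts} and left-multiply numerator and denominator by $V_B V_S^{-1}$ (using invertibility of $V_S$) to convert $V_S + \lambda V_S V_B^{-1} V_S + \lambda V_S$ into the cleaner form involving $V_B+\lambda V_S+\lambda V_B$; this is a routine cancellation once the general formula is established, so I do not anticipate difficulty there.
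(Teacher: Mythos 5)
Your proposal is correct and follows essentially the same route as the paper: the paper likewise writes down the two normal equations $(V_B+V_S)\hat\beta = V_S\hat\beta_S+V_B\hat\beta_B - V_B\hat\gamma$ and $(V_B+\lambda V_T)\hat\gamma =V_B(\hat\beta_B - \hat\beta)$, solves the second for $\hat\gamma$, substitutes into the first to get $W_\lambda = (V_S+V_B-V_B(V_B+\lambda V_T)^{-1}V_B)^{-1}V_S$, and then simplifies via exactly the resolvent identity you cite, $V_B-V_B(V_B+\lambda V_T)^{-1}V_B=\lambda V_B(V_B+\lambda V_T)^{-1}V_T$. Your treatment of the $V_T=V_S$ specialization (left-multiplying both factors by $V_BV_S^{-1}$) is also a valid way to obtain the simplified form.
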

\begin{proof}
The normal equations of~\eqref{eq:quadpenalized} are
$$
(V_B+V_S)\hat\beta = V_S\hat\beta_S+V_B\hat\beta_B - V_B\hat\gamma
\quad\text{and}\quad
(V_B+\lambda V_T)\hat\gamma =V_B\hat\beta_B - V_B\hat\beta.
$$
Solving the second equation for $\hat\gamma$, plugging the result into
the first and solving for $\hat\beta$, yields the result with
$W_\lambda = (V_S+V_B-V_B(V_B+\lambda V_T)^{-1}V_B)^{-1}V_S$.
This expression for $W_\lambda$ simplifies as given and simplifies
further when $V_T=V_S$.
\end{proof}

The remaining challenge in model fitting
is to choose a value of $\lambda$.
Because we are only interested in making predictions for
the $S$ data, not the $B$ data, 
the ideal value of $\lambda$ is one that
optimizes the prediction error on sample $S$.
One reasonable approach is
to use cross-validation by holding out a portion
of sample $S$ and predicting the held-out values
from a model fit to the held-in ones as well
as the entire $B$ sample.
One may apply either leave-one-out cross-validation
or more general $K$-fold cross-validation. In the latter
case, sample $S$ is split into $K$ nearly equally
sized parts and predictions based on sample $B$
and $K-1$ parts of sample $S$ are used for the $K$'th
held-out fold of sample $S$.

We prefer to use
criteria such as AIC, AICc, or BIC in order to avoid
the cost and complexity of cross-validation. 
To compute AIC and alternatives,
we need to measure the degrees of freedom
used in fitting the model. 
We define the degrees of freedom to be
\begin{align}\label{eq:efdf}
\df(\lambda)=\frac1{\sigma_S^2}\sum_{i\in S}\cov(\hat Y_i,Y_i),
\end{align}
where
$\hat Y_S = X_S\hat\beta_\lambda$.
This is the formula of \cite{ye:1998} and \cite{efro:2004} 
adapted to our setting where the focus is only on predictions
for the $S$ data.
We will see later that the resulting AIC type estimates
based on the degrees of freedom perform similarly
to our focused cross-validation described above.

\begin{theorem}\label{thm:df}
For data enriched regression 
the degrees of freedom given at~\eqref{eq:efdf} satisfies
$\df(\lambda) = \tr(W_\lambda)$ where
$W_\lambda$ is given in Lemma~\ref{lem:matrixweights}.
If $V_T=V_S$, then
\begin{align}\label{eq:df}
\df(\lambda) = 
\sum_{j=1}^d \frac{1+\lambda\nu_j}{1+\lambda+\lambda\nu_j}
\end{align}
where $\nu_1,\dots,\nu_d$ are the eigenvalues of 
\begin{align}\label{eq:eigM}
M\equiv V_S^{1/2}V_B^{-1}V_S^{1/2}
\end{align}
in which $V_S^{1/2}$ is a symmetric matrix square root of $V_S$.
\end{theorem}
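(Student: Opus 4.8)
The plan is to prove the two claims in turn, starting from the definition~\eqref{eq:efdf} and the closed form for $\hat\beta$ supplied by Lemma~\ref{lem:matrixweights}. First I would substitute the representation $\hat\beta = W_\lambda\hat\beta_S + (I-W_\lambda)\hat\beta_B$ into $\hat Y_S = X_S\hat\beta$ to obtain
$$
\hat Y_S = X_S W_\lambda \hat\beta_S + X_S(I-W_\lambda)\hat\beta_B.
$$
The essential observation is that $\hat\beta_B = V_B^{-1}X_B^\tran Y_B$ is a function of $Y_B$ alone, and the two samples are independent, so $\hat\beta_B$ is uncorrelated with $Y_S$; hence the second term contributes nothing to $\sum_{i\in S}\cov(\hat Y_i, Y_i)$. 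Here $W_\lambda$ is deterministic, depending only on the design matrices.

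Writing $\hat\beta_S = V_S^{-1}X_S^\tran Y_S$ and using that the $S$ errors are uncorrelated with common variance $\sigma_S^2$, so that $\cov(Y_S,Y_S)=\sigma_S^2 I$, I would compute the cross-covariance matrix $\cov(\hat Y_S, Y_S) = \sigma_S^2\, X_S W_\lambda V_S^{-1}X_S^\tran$. Since $\sum_{i\in S}\cov(\hat Y_i,Y_i)$ is the trace of this matrix, dividing by $\sigma_S^2$ and invoking the cyclic invariance of the trace together with $X_S^\tran X_S = V_S$ collapses the expression to $\tr(W_\lambda V_S^{-1}V_S) = \tr(W_\lambda)$, which is the first claim.

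For the specialization $V_T = V_S$, I would start from the simplified weight matrix $W_\lambda = ((1+\lambda)V_B + \lambda V_S)^{-1}(V_B + \lambda V_S)$ given in Lemma~\ref{lem:matrixweights}. The idea is to factor $V_B^{1/2}$ out of both the inverted factor and the right-hand factor, introducing the symmetric matrix $N = V_B^{-1/2}V_S V_B^{-1/2}$, so that $W_\lambda$ becomes similar to $((1+\lambda)I + \lambda N)^{-1}(I + \lambda N)$. Because the trace is invariant under this similarity and $N$ is symmetric, hence orthogonally diagonalizable, $\tr(W_\lambda)$ reduces to $\sum_{j=1}^d (1+\lambda\nu_j)/(1+\lambda+\lambda\nu_j)$, where the $\nu_j$ are the eigenvalues of $N$.

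The step needing care, and the only place I would expect real subtlety, is justifying that the eigenvalues of $N$ are exactly those of $M = V_S^{1/2}V_B^{-1}V_S^{1/2}$ as stated. This follows from the standard fact that conjugate matrices share their spectra: conjugating $N$ by $V_B^{-1/2}$ and $M$ by $V_S^{-1/2}$ both yield $V_B^{-1}V_S$, so $N$ and $M$ are each similar to $V_B^{-1}V_S$ and therefore have the identical eigenvalues $\nu_1,\dots,\nu_d$. With that identification the formula~\eqref{eq:df} follows at once.
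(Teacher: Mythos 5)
Your proposal is correct and follows essentially the same route as the paper: the first claim via the cross-covariance $\cov(X_S\hat\beta,Y_S)=\sigma_S^2\,X_SW_\lambda V_S^{-1}X_S^\tran$ (using independence of $\hat\beta_B$ from $Y_S$) together with cyclic invariance of the trace, and the second claim via a similarity transformation of $W_\lambda$ to a symmetric matrix followed by orthogonal diagonalization. The only cosmetic difference is that the paper conjugates by $V_S^{1/2}$ and lands on $M=V_S^{1/2}V_B^{-1}V_S^{1/2}$ directly, whereas you conjugate the $V_T=V_S$ form of $W_\lambda$ by $V_B^{1/2}$ to reach $N=V_B^{-1/2}V_SV_B^{-1/2}$ and then correctly close the gap by noting that $N$ and $M$ are each similar to $V_B^{-1}V_S$ and hence share the eigenvalues $\nu_1,\dots,\nu_d$.
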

\begin{proof}
 Section~\ref{sec:thmdf} in the Appendix.
\end{proof}

With a notion of degrees of freedom customized
to the data enrichment context we can now define
the corresponding criteria such as
\begin{align}
\text{AIC}(\lambda) & = n \log (\hat\sigma^2_S(\lambda)) + n\Bigl(
1+\frac{2\df(\lambda)}n\Bigr)\quad \text{and}\notag\\
\text{AICc}(\lambda) & = 
n \log (\hat\sigma^2_S(\lambda)) + n
\Bigl(1+\frac{\df(\lambda)}n\Bigr)
\Bigm/\Bigl(1-\frac{\df(\lambda)+2}n\Bigr),
\end{align}
where $\hat\sigma^2_{S}(\lambda)
= (n-d)^{-1}\sum_{i \in S}^n(Y_i-X_i\hat\beta(\lambda))^2$.
The AIC is more appropriate than BIC here since
our goal is prediction accuracy, not model selection.
We prefer the AICc criterion of \cite{hurv:tsai:1989}
because it is more conservative as the degrees
of freedom become large compared to the sample size.

Next we illustrate some special cases of
the degrees of freedom formula in Theorem~\ref{thm:df}.
First, suppose that $\lambda =0$, so that there is no penalization
on $\gamma$.  Then 
$\df(0) = \tr( I ) = d$ as is appropriate for regression on
sample $S$ only.

We can easily see that the degrees of freedom are
monotone decreasing in $\lambda$. As $\lambda\to\infty$
the degrees of freedom drop to
$\df(\infty) = \sum_{j=1}^d\nu_j/(1+\nu_j)$.
This can be much smaller than $d$. For instance
if $V_S=n\Sigma$ and $V_B=N\Sigma$ for some positive
definite $\Sigma\in\real^{d\times d}$,
then all $\nu_j = n/N$ and so
$\df(\infty) = d/(1+N/n)\le dn/N$.


Monotonicity of the degrees of freedom makes it
easy to search for the value $\lambda$ which
delivers a desired degrees of freedom.  We have found
it useful to investigate $\lambda$ over a numerical
grid corresponding to degrees of freedom decreasing
from $d$ by an amount $\Delta$ (such as $0.25$)
to the smallest such value above $\df(\infty)$.
It is easy to adjoin $\lambda=\infty$ (sample pooling)
to this list as well.

\subsection{Predictive mean square errors}

Here we develop an oracle's choice for $\lambda$ and
a corresponding plug-in estimate.
We work in the  case where $V_S=V_T$ and we assume
that $V_S$  has full rank.
Given $\lambda$,
the predictive mean square error is $\e(\Vert X_{S}(\hat{\beta}-\beta)\Vert^{2})$. 

We will use the matrices $V_S^{1/2}$ and $M$ from
Theorem~\ref{thm:df}
and  the eigendecomposition $M= UDU^\tran$
where the $j$'th column of $U$ is $u_j$
and $D= \diag(\nu_j)$.

\begin{theorem}\label{thm:pmse}
The predictive mean square error  of the
data enrichment estimator is
\begin{align}
\e\bigl(\Vert X_{S}(\hat{\beta}-\beta)\Vert^2\bigr)
 & =\sigma_{S}^{2}\sum_{j=1}^{d}\frac{(1+\lambda\nu_{j})^{2}}{(1+\lambda+\lambda\nu_{j})^{2}}+
\sum_{j=1}^{d}\frac{\lambda^2\kappa_j^2}{(1+\lambda+\lambda\nu_{j})^{2}}
\label{eq:predictive-mse}\end{align}
where 
$\kappa_j^2 = {u_{j}^\tran V_{S}^{1/2}\Theta V_{S}^{1/2}u_{j}}$
for
$\Theta=\gamma\gamma^{T}+\sigma_{B}^{2}V_{B}^{-1}$.
\end{theorem}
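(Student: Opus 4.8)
The plan is to start from the quadratic-form representation $\|X_S(\hat\beta-\beta)\|^2 = (\hat\beta-\beta)^\tran V_S(\hat\beta-\beta)$, so that $\e(\|X_S(\hat\beta-\beta)\|^2) = \tr\bigl(V_S\,\e[(\hat\beta-\beta)(\hat\beta-\beta)^\tran]\bigr)$. Using the representation $\hat\beta = W_\lambda\hat\beta_S + (I-W_\lambda)\hat\beta_B$ from Lemma~\ref{lem:matrixweights}, together with the facts that $\hat\beta_S$ and $\hat\beta_B$ are independent with $\e[\hat\beta_S]=\beta$, $\e[\hat\beta_B]=\beta+\gamma$, $\cov(\hat\beta_S)=\sigma_S^2 V_S^{-1}$ and $\cov(\hat\beta_B)=\sigma_B^2 V_B^{-1}$, a bias--variance decomposition writes the mean-squared-error matrix as a sum of three pieces: $\sigma_S^2\, W_\lambda V_S^{-1}W_\lambda^\tran$ from the small-sample variance, $\sigma_B^2\,(I-W_\lambda)V_B^{-1}(I-W_\lambda)^\tran$ from the big-sample variance, and $(I-W_\lambda)\gamma\gamma^\tran(I-W_\lambda)^\tran$ from the bias $\e[\hat\beta]-\beta = (I-W_\lambda)\gamma$.

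The first simplification I would make is to notice that the two $\hat\beta_B$-driven pieces share the sandwich $(I-W_\lambda)(\cdot)(I-W_\lambda)^\tran$, so they combine into $(I-W_\lambda)\Theta(I-W_\lambda)^\tran$ with $\Theta = \gamma\gamma^\tran + \sigma_B^2 V_B^{-1}$, exactly as in the statement. Thus $\e(\|X_S(\hat\beta-\beta)\|^2) = \sigma_S^2\,\tr(V_S W_\lambda V_S^{-1}W_\lambda^\tran) + \tr\bigl(V_S(I-W_\lambda)\Theta(I-W_\lambda)^\tran\bigr)$, and it remains only to evaluate these two traces.

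The heart of the argument is to diagonalize $W_\lambda$ through $M = V_S^{1/2}V_B^{-1}V_S^{1/2}$. Writing $V_B = V_S^{1/2}M^{-1}V_S^{1/2}$ in the $V_T=V_S$ form of $W_\lambda$ and using $M=UDU^\tran$ with $D=\diag(\nu_j)$, I expect to obtain $W_\lambda = V_S^{-1/2}U\Lambda U^\tran V_S^{1/2}$ where $\Lambda = \diag(d_j)$ and $d_j = (1+\lambda\nu_j)/(1+\lambda+\lambda\nu_j)$; correspondingly $I-W_\lambda = V_S^{-1/2}U(I-\Lambda)U^\tran V_S^{1/2}$ with $1-d_j = \lambda/(1+\lambda+\lambda\nu_j)$. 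The payoff is that the $V_S^{1/2}$ factors telescope. For the variance trace one gets $V_S W_\lambda V_S^{-1}W_\lambda^\tran = V_S^{1/2}U\Lambda^2 U^\tran V_S^{-1/2}$, whose trace is $\sum_j d_j^2$ by cyclic invariance and orthogonality of $U$, yielding the first sum in~\eqref{eq:predictive-mse}. For the bias trace the same substitutions reduce it to $\tr\bigl((I-\Lambda)U^\tran V_S^{1/2}\Theta V_S^{1/2}U(I-\Lambda)\bigr)$, and since $I-\Lambda$ is diagonal only the diagonal entries $u_j^\tran V_S^{1/2}\Theta V_S^{1/2}u_j = \kappa_j^2$ survive, producing $\sum_j (1-d_j)^2\kappa_j^2$, the second sum.

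The main obstacle is this diagonalization step: recognizing that conjugating by $V_S^{1/2}$ and passing to the eigenbasis $U$ of $M$ simultaneously diagonalizes the small-sample variance structure and aligns the big-sample-plus-bias structure $\Theta$ so that, after the trace, only the diagonal weights $\kappa_j^2$ contribute. Once that change of basis is in place, both traces collapse to single sums over the eigenvalues $\nu_j$, and the remaining work is routine bookkeeping of the scalar factors $d_j$ and $1-d_j$.
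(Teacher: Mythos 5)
Your proposal is correct and follows essentially the same route as the paper's proof: the trace representation $\tr(V_S\,\e[(\hat\beta-\beta)(\hat\beta-\beta)^\tran])$, the bias--variance decomposition that merges the big-sample variance and the squared bias into $(I-W_\lambda)\Theta(I-W_\lambda)^\tran$, and the conjugation of $W_\lambda$ by $V_S^{1/2}$ to get $U\,\mathrm{diag}\bigl((1+\lambda\nu_j)/(1+\lambda+\lambda\nu_j)\bigr)U^\tran$ (the paper's $\wt W$), after which both traces collapse to the stated sums. The only difference is notational: the paper names the conjugated matrices $\wt W$ and $\wt\Theta$ and works with them, whereas you substitute the diagonalized form of $W_\lambda$ directly into the two traces.
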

\begin{proof}
 Section~\ref{sec:pmse}.
\end{proof}

The first term in~\eqref{eq:predictive-mse} is a variance
term.  It equals $d\sigma^2_S$ when $\lambda =0$
but for $\lambda>0$ it is reduced due to the use
of the big sample.
The second term represents the error, both bias squared
and variance, introduced by the big sample.

\subsection{A plug-in method}

A natural choice of $\lambda$ is to minimize the predictive mean
square error, which must be estimated. We propose a plug-in method
that replaces the unknown parameters $\sigma_{S}^{2}$ and $\kappa_j^2$
from Theorem~\ref{thm:pmse} by sample estimates.
For estimates ${\hat\sigma_{S}^{2}}$ and $\hat\kappa_j^2$
we choose
\begin{align}
\hat{\lambda} &
=\arg\min_{\lambda\geq0}
\,\sum_{j=1}^{d}\,
\frac{\hat\sigma_{S}^{2}
(1+\lambda\nu_j)^2 + \lambda^2\hat\kappa_j^2}
{(1+\lambda+\lambda\nu_{j})^{2}}.
\label{eq:plug-in-lambda}
\end{align}

From the sample data we take 
$\hat\sigma^2_S = \Vert Y_S-X_S\hat\beta_S\Vert^2/(n-d)$.
A straightforward plug-in estimate of 
the matrix $\Theta$ in Theorem~\ref{thm:pmse} is 
\begin{align*}
\widehat{\Theta}_{\plug} & =\hat{\gamma}\hat{\gamma}^\tran+{\hat\sigma_{B}^{2}}V_{B}^{-1},
\end{align*}
where $\hat\gamma = \hat\beta_B - \hat\beta_S$.
Now we take $\hat\kappa^2_j = u_j^\tran V_S^{1/2}\wh\Theta V_S^{1/2}u_j$
recalling that $u_j$ and $\nu_j$ derive from
the eigendecomposition of $M=V_S^{1/2}V_B^{-1}V_S^{1/2}$.
The resulting optimization yields an estimate $\hat\lambda_{\plug}$.

The estimate $\wh\Theta_\plug$ is biased upwards because
$\e( \hat\gamma\hat\gamma^\tran)
=\gamma\gamma^\tran + \sigma^2_BV_B^{-1}+\sigma^2_SV_S^{-1}$.
We have used a bias-adjusted plug-in estimate
\begin{align}\label{eq:bapi}
\widehat\Theta_\bapi 
= \hat\sigma_B^2 V_B^{-1}+ (\hat\gamma\hat\gamma^\tran-\hat\sigma_B^2
V_B^{-1} - \hat\sigma_S^2V_S^{-1})_+
\end{align}
where the positive part operation on a symmetric matrix preserves
its eigenvectors but replaces any negative eigenvalues by $0$.
Similar results can be obtained with
\begin{align}\label{eq:bapi2}
\wt{\Theta}_{\bapi}
=\bigl(\hat{\gamma}\hat{\gamma}^\tran-{\hat\sigma_{S}^{2}}V_{S}^{-1}\bigr)_{+}.
\end{align}
This estimator is somewhat simpler but~\eqref{eq:bapi}
has the advantage of being at least as large as
$\hat\sigma^2_BV_B^{-1}$ while~\eqref{eq:bapi2} can degenerate to $0$.

\subsection{James-Stein shrinkage estimators}\label{sec:stein}

For background on these estimators, see \cite{efron1973stein}.
We shrink $\hat\theta_S = V_S^{1/2}\hat\beta_S
\sim\dnorm(V_S^{1/2}\beta,\sigma^2_SI_n)$ towards
a target vector, to get better estimators of 
$\theta_S = V_S^{1/2}\beta$.
To make use of the big data set we shrink $\hat\theta_S$ towards
$$
\hat\theta_B = V_S^{1/2}\hat\beta_B \sim\dnorm( V_S^{1/2}(\beta+\gamma),
V_S^{1/2}V_B^{-1}V_S^{1/2}\sigma^2_B). 
$$

We consider two shrinkers
\begin{align}
\hat\theta_{\js,B} &= \hat\theta_B
+\Bigl( 1- \frac{d-2}{\Vert\hat\theta_S-\hat\theta_B\Vert^2/\sigma^2_S}\Bigr)(\hat\theta_S-\hat\theta_B),\quad\text{and}\notag\\
\hat\theta_{\js,B+} &= \hat\theta_B
+\Bigl( 1- \frac{d-2}{\Vert\hat\theta_S-\hat\theta_B\Vert^2/\sigma^2_S}\Bigr)_+(\hat\theta_S-\hat\theta_B).\label{eq:jsbp}
\end{align}
Each of these makes $\hat\theta_S$ inadmissible in squared
error loss as an estimate of $\theta_S$, when $d\ge3$.
The squared error loss on the $\theta$ scale is
\begin{align}\label{eq:theloss}
(\hat\theta_S-\theta_S)^\tran (\hat\theta_S-\theta_S)
=(\hat\beta_S-\beta_S)^\tran V_S(\hat\beta_S-\beta_S).
\end{align}

When $d\ge3$ and our quadratic loss is based on $V_S$,
we can make $\hat\beta_S$ inadmissible by shrinkage,
so long as $d\ge3$. \cite{copa:1983} 
found that ordinary least squares regression is
inadmissible when $d\ge4$. 
\cite{stei:1960} also obtained an inadmissibility result
for regresssion, but under stronger conditions than Copas
needs. \cite{copa:1983} applies no shrinkage to the intercept but shrinks
the rest of the coefficient vector towards zero.
In this problem it is reasonable to shrink the entire coefficient vector
as the big data set supplies a nonzero default intercept.

\section{The location model}\label{sec:mean}

The simplest instance of our problem is the
location model where $X_S$ is a column of $n$
ones and $X_B$ is a column of $N$ ones.
Then the vector $\beta$
is simply a scalar intercept that we call $\mu$
and the vector $\gamma$ is a scalar mean
difference that we call $\delta$.
The response values in the
small data set are $Y_i=\mu+\err_i$
while those in the big data set are $Y_i =(\mu+\delta)+\err_i$.
In the location family we lose no generality taking the
quadratic penalty to be $\lambda\delta^2$.

The quadratic criterion is
$
\sum_{i\in S} (Y_i-\mu)^2 + \sum_{i\in B}(Y_i-\mu-\delta)^2 + \lambda\delta^2.
$
Taking $V_S=n$, $V_B=N$ and $V_T=1$ in Lemma~\ref{lem:matrixweights}
yields
$$
\hat\mu=
\omega \bar Y_S + (1-\omega)\bar Y_B
\quad\text{with}\quad
\omega = \frac{nN+n\lambda}{nN+n\lambda+N\lambda}
= \frac{1+\lambda/N}{1+\lambda/N+\lambda/n}.$$
Choosing  a value for $\omega$ corresponds
to choosing 
$$\lambda = \frac{nN(1-\omega)}{N\omega-n(1-\omega)}.$$
The degrees of freedom in this case reduce to
$\df(\lambda) =\omega$,
which ranges from $\df(0)=1$ down to
$\df(\infty) = n/(n+N)$.

\subsection{Oracle estimator of $\omega$}

The mean square error of $\hat\mu(\omega)$ is
$$
\mse(\omega)
=\omega^2\frac{\sigma^2_S}n
+(1-\omega)^2\Bigl( \frac{\sigma^2_B}N
+\delta^2\Bigr).
$$
The mean square optimal value of $\omega$ (available to an oracle) is
$$
\omega_\orcl = 
\frac{\delta^2 + \sigma_B^2/N}{\delta^2 + \sigma_B^2/N+\sigma^2_S/n}.
$$
Pooling the data corresponds
to $\omega_\pool = n/(N+n)$
and makes $\hat\mu$ equal the pooled
mean $\bar Y_P\equiv (n\bar Y_S+N\bar Y_B)/(n+N)$.
Ignoring the large data set 
corresponds to $\omega_S=1$.
Here $\omega_\pool \le \omega_\orcl \le \omega_S$.

The mean squared error reduction for the oracle is
\begin{align}\label{eq:orclmse}
\frac{\mse(\omega_\orcl)}{\mse(\omega_S)}
= \omega_\orcl,
\end{align}
after some algebra.
If $\delta\ne0$, then as $\min(n,N)\to\infty$
we find $\omega_{\orcl}\to 1$ and the optimal $\omega$
corresponds to simply using the small sample and ignoring
the large one.
If instead $\delta\ne0$ and $N\to\infty$ for finite $n$, then
the effective sample size for data enrichment may be
defined using~\eqref{eq:orclmse} as
\begin{align}\label{eq:degain}
\wt n =\frac{n}{\omega_\orcl} = n\frac{\delta^2+\sigma^2_B/N+\sigma^2_S/n}{\delta^2+\sigma^2_B/N}
\to n + \frac{\sigma^2_S}{\delta^2}.
\end{align}
The mean squared error from data enrichment with $n$ observations
in the small sample, using the oracle's choice of $\lambda$, 
matches that of $\wt n$ IID observations from the small sample.
We effectively gain up to  $\sigma^2_S/\delta^2$ observations
worth of information. This is an upper bound on the gain because
we will have to estimate~$\lambda$.

Equation~\eqref{eq:degain} shows that
the benefit from data enrichment is a small sample
phenomenon.  The effect is additive not multiplicative
on the small sample size $n$.  As a result, more valuable
gains are expected in small samples.
In some of the motivating examples we have found
the most meaningful improvements from data enrichment on
disaggregated data sets, such as specific groups of
consumers. 

\subsection{Plug-in and other estimators of $\omega$}

A natural approach to choosing $\omega$ is
to plug in sample estimates
$$
\hat\delta_0 = \bar Y_B-\bar Y_S,\quad
\hat\sigma^2_S = \frac1n\sum_{i\in S}(Y_i-\bar Y_S)^2,\quad
\text{and}\quad
\hat\sigma^2_B = \frac1N\sum_{i\in B}(Y_i-\bar Y_B)^2.
$$
We then use
$\omega_\plug=
({\hat\delta_0^2 + \hat\sigma_B^2/N})/
({\hat\delta_0^2 + \hat\sigma_B^2/N+\hat\sigma^2_S/n})
$
or equivalently
$\lambda_\plug=
\hat\sigma^2_S/(\hat\delta_0^2 + (\hat\sigma^2_B-\hat\sigma^2_S)/N)$.
Our bias-adjusted plug-in method reduces to
\begin{align*}
\omega_{\bapi}
&
= \frac{\hat\theta_\bapi}{\hat\theta_\bapi +
  \hat\sigma^2_S/n},\quad\text{where}\quad
\hat\theta_\bapi =  \frac{\hat\sigma^2_B}N + 
\Bigl(\hat\delta_0^2 - \frac{\hat\sigma_S^2}n -  \frac{\hat\sigma_B^2}N\Bigr)_+.
\end{align*} 
The simpler alternative
$\wt\omega_{\bapi}
=((\hat\delta_0^2 - \hat\sigma_S^2/n)/\hat\delta_0^2)_+$
gave virtually identical values in our numerical results reported below.

If we bootstrap the $S$ and $B$ samples
independently $M$ times and choose $\omega$
to minimize
$$
\frac1M\sum_{m=1}^M
\bigl(
\bar Y_S -
\omega \bar Y_S^{m*}-(1-\omega)\bar Y_B^{m*}
\bigr)^2,
$$
then the minimizing value tends to $\omega_\plug$
as $M\to\infty$. Thus bootstrap methods give an
approach analogous to plug-in methods, when no
simple plug-in formula exists.  


We can also determine the effects of cross-validation
in the location setting, and arrive at an estimate
of $\omega$ that we can use without actually cross-validating.
Consider splitting the small sample
into $K$ parts that are held out one by one
in turn. The $K-1$ retained parts are used
to estimate $\mu$ and then the squared
error is judged on the held-out part.
That is
$$
\omega_\cv 
=
\arg\min_\omega
\frac1K\sum_{k=1}^K
\bigl(\bar Y_{S,k}-\omega\bar Y_{S,-k}-(1-\omega)\bar Y_B\bigr)^2,
$$
where $\bar Y_{S,k}$ is the average of $Y_i$ over the $k$'th
part of $S$ and $\bar Y_{S,-k}$ is the average of $Y_i$ over
all $K-1$ parts excluding the $k$'th. 

If $n$ is a multiple of $K$ and we average over
all of the $K$-fold sample splits we might use,
then an analysis in Section~\ref{sec:cvderive}
shows that $K$-fold cross-validation chooses a
weighting centered around
\begin{align}\label{eq:cvk}
\omega_{\cv,K} & = 
\frac
{ \hat\delta_0^2 - \hat\sigma^2_{S}/(n-1)}
{ \hat\delta_0^2 + \hat\sigma^2_{S}/[(n-1)(K-1)]}.
\end{align}
Cross-validation allows $\omega<0$. This can
arise when the bias is small and then sampling
alone makes the held-out part of the small sample
appear negatively correlated with the held-in part.
The effect can appear with any $K$. 
We replace any $\omega_{\cv,K} <n/(n+N)$ by $n/(n+N)$.

Leave-one-out cross-validation has $K=n$ (and $r=1$) so it
chooses a weight centered around
$\omega_{\cv,n} = 
[{\hat\delta_0^2 - \hat\sigma^2_S/(n-1)}]/
[\hat\delta_0^2+\hat\sigma^2_S/(n-1)^2]$.
Smaller $K$, such as choosing $K=10$ versus $n$,
tend to make $\omega_{\cv,K}$ smaller resulting
in less weight on $\bar Y_S$. 
In other words, $10$-fold cross-validation makes more aggressive
use of the large sample than does leave-one-out.

\begin{remark}
The cross-validation estimates do not make use of $\hat\sigma^2_B$
because the large sample is held fixed. They are in this sense
conditional on the large sample.  Our oracle takes account of the
randomness in set $B$, so it is not conditional.  One can define
a conditional oracle without difficulty, but we omit the details.
Neither  the bootstrap nor the plug-in methods are conditional, 
as they approximate our oracle.
Taking $\omega_\bapi$ as a representor of unconditional methods
and $\omega_{\cv,n}$ as a representor of conditional ones,
we see that the latter has a larger denominator while they
both have the same numerator, at least when $\hat\delta^2_0>\hat\sigma^2_S/n$.
This suggests that conditional methods are more aggressive and
we will see this in the simulation results.
\end{remark}

\subsection{$L_1$ penalty}

For the location model,
it is convenient to write the $L_1$
penalized criterion as
\begin{align}\label{eq:l1loccrit}
\sum_{i\in S}(Y_i-\mu)^2+\sum_{i\in B}(Y_i-\mu-\delta)^2 + 2\lambda|\delta|.
\end{align}
The minimizers $\hat\mu$ and $\hat\delta$ satisfy
\begin{equation}\label{eq:l1locest}
\begin{split}
\hat\mu &= \frac{n\bar Y_S + N(\bar Y_B-\hat\delta)}{n+N},\quad\text{and}\\
\hat\delta & = \Theta( \bar Y_B-\hat\mu;\lambda/N)
\end{split}
\end{equation} 
for the 
soft thresholding function
$\Theta( z;\tau) = \sign(z)(|z|-\tau)_+$.

The estimate $\hat\mu$ ranges from $\bar Y_S$
at $\lambda=0$ to the pooled mean $\bar Y_P$ at $\lambda=\infty$.
In fact $\hat\mu$ reaches $\bar Y_P$ at 
a finite value $\lambda = \lambda_*\equiv nN|\bar Y_B-\bar Y_S|/(N+n)$
and both $\hat\mu$ and $\hat\delta$ are linear in $\lambda$ 
on the interval $[0,\lambda_*]$:

\begin{theorem}\label{thm:l1closed}
If $0\le\lambda\le nN|\bar Y_B-\bar Y_S|/(n+N)$
then the minimizers of~\eqref{eq:l1loccrit} are
\begin{equation}\label{eq:l1soln}
\begin{split}
\hat\mu & = \bar Y_S +
\frac{\lambda }{n}\, \sign(\bar Y_B-\bar Y_S) ,\quad\text{and}\\
\hat\delta & = \bar Y_B-\bar Y_S - \lambda\frac{N+n}{Nn}\,\sign(\bar Y_B-\bar Y_S).
\end{split}
\end{equation}
If $\lambda >nN|\bar Y_B-\bar Y_S|/(n+N)$ then they are
$\hat\delta = 0$ and $\hat\mu = \bar Y_P$.
\end{theorem}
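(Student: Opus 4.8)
The plan is to exploit convexity together with the first-order characterization already recorded in~\eqref{eq:l1locest}. The quadratic part $\sum_{i\in S}(Y_i-\mu)^2+\sum_{i\in B}(Y_i-\mu-\delta)^2$ is strictly convex in $(\mu,\delta)$, since its Hessian $2\bigl(\begin{smallmatrix} n+N & N\\ N & N\end{smallmatrix}\bigr)$ has determinant $2nN>0$ and is positive definite for $n,N\ge1$, and adding the convex penalty $2\lambda|\delta|$ keeps the objective strictly convex. Hence~\eqref{eq:l1loccrit} has a unique minimizer, and the stationarity and subgradient conditions summarized in~\eqref{eq:l1locest} are both necessary and sufficient. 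My strategy is to produce the candidate pair in~\eqref{eq:l1soln} and confirm it solves~\eqref{eq:l1locest}; by convexity this certifies it as the global minimizer.

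First I would reduce to the case $\bar Y_B>\bar Y_S$. If $\bar Y_B=\bar Y_S$ the threshold is $0$ and one checks at once that $\hat\delta=0$ and $\hat\mu=\bar Y_S=\bar Y_P$, while the case $\bar Y_B<\bar Y_S$ follows from $\bar Y_B>\bar Y_S$ by the sign symmetry $Y_i\mapsto -Y_i$, which sends $(\hat\mu,\hat\delta)\mapsto(-\hat\mu,-\hat\delta)$ and leaves~\eqref{eq:l1loccrit} invariant. With $\sign(\bar Y_B-\bar Y_S)=1$ fixed, I split on the two regimes of the soft-thresholding map. In the active regime I posit $\hat\delta>0$: substituting the explicit formula for $\hat\mu$ from the first line of~\eqref{eq:l1locest} gives $\bar Y_B-\hat\mu=[n(\bar Y_B-\bar Y_S)+N\hat\delta]/(n+N)$, so the active branch $\hat\delta=(\bar Y_B-\hat\mu)-\lambda/N$ reduces to a single linear equation in $\hat\delta$. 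Solving it yields $\hat\delta=(\bar Y_B-\bar Y_S)-\lambda(n+N)/(nN)$, and back-substitution produces $\hat\mu=\bar Y_S+\lambda/n$, matching~\eqref{eq:l1soln}. The posited $\hat\delta>0$ is exactly the condition $\lambda<nN(\bar Y_B-\bar Y_S)/(n+N)$, which pins down the stated threshold; and since $\bar Y_B-\hat\mu=\hat\delta+\lambda/N>\lambda/N$ here, the soft-thresholding argument lies on the positive active branch, so the candidate is self-consistent. In the thresholded regime I set $\hat\delta=0$, forcing $\hat\mu=\bar Y_P$ via~\eqref{eq:l1locest}; then $\bar Y_B-\bar Y_P=n(\bar Y_B-\bar Y_S)/(n+N)$, and the no-thresholding condition $|\bar Y_B-\hat\mu|\le\lambda/N$ holds precisely when $\lambda\ge nN(\bar Y_B-\bar Y_S)/(n+N)$.

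The computations are routine; the only real care is the regime bookkeeping. The pitfall to avoid is asserting a branch of the soft-thresholding map without checking that the resulting solution actually lands in that branch. To sidestep this I would, instead of arguing about which branch is active, take the closed forms in~\eqref{eq:l1soln} as a candidate and verify the subgradient optimality conditions directly: the $\mu$-equation is linear and confirmed by substitution, while for $\delta$ one checks the stationarity identity $N(\bar Y_B-\hat\mu-\hat\delta)=\lambda\,\sign(\hat\delta)$ when $\hat\delta\ne0$ and the inclusion $|N(\bar Y_B-\hat\mu)|\le\lambda$ when $\hat\delta=0$. Because the objective is convex, meeting these conditions certifies global optimality, and the continuity of both expressions at $\lambda=nN(\bar Y_B-\bar Y_S)/(n+N)$ shows the two regimes agree at the boundary, completing the proof.
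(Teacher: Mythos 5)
Your proof is correct and takes essentially the same route as the paper's: both certify the closed forms in~\eqref{eq:l1soln} by checking the first-order/soft-thresholding conditions~\eqref{eq:l1locest} in each regime, with the threshold $\lambda_*=nN|\bar Y_B-\bar Y_S|/(n+N)$ emerging from the branch-consistency check. The only differences are refinements rather than a new approach: you make explicit the strict-convexity argument that makes those conditions sufficient for global optimality (the paper leaves this implicit), you treat the regime $\lambda>\lambda_*$ via the subgradient inclusion $|N(\bar Y_B-\hat\mu)|\le\lambda$ at $\hat\delta=0$ instead of the paper's derivative-sign argument, and you add a harmless symmetry reduction to the case $\bar Y_B>\bar Y_S$.
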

\begin{proof}
 Section~\ref{sec:proofl1closed} in the Appendix.
\end{proof}

With an $L_1$ penalty on $\delta$ we find from
Theorem~\ref{thm:l1closed} that
$$
\hat\mu = \bar Y_S + \min(\lambda,\lambda_*) \sign(\bar Y_B-\bar Y_S)/n.
$$
That is, the estimator moves $\bar Y_S$ towards $\bar Y_B$
by an amount $\lambda/n$ except that it will not
move past the pooled average $\bar Y_P$.
The optimal choice of $\lambda$ is not available in
closed form. 

\subsection{An $L_1$ oracle}

The $L_2$ oracle depends only on moments of the data.
The $L_1$ case proves to be more complicated, depending
also on quantiles of the error distribution.
To investigate $L_1$ penalization, 
we suppose that the errors are Gaussian.
Then we can compute $\e((\hat\mu(\lambda)-\mu)^2)$ 
for the $L_1$ penalization by a lengthy expression
broken into several steps below. That expression is not simple to interpret.
But we can use it to numerically find the best value
of $\lambda$ for an oracle using the $L_1$ penalty.  
That then allows us to compare the $L_1$ and $L_2$ oracles
in Section~\ref{sec:locationnumer}.

Let $D=\bar Y_B-\bar Y_S$
and then define
\begin{align*}
F & = N/(n+N) & c & = \lambda/(nF)\\
\tau & = (\sigma^2_S/n+\sigma^2_B/N)^{1/2}
&\alpha & = (\sigma^2_B/N)/(\sigma^2_S/n)\\
\eta_+ & = (\delta-c)/\tau & \eta_- & = (-\delta-c)/\tau\\
\varphi_\pm & = \varphi(\eta_\pm) & \Phi_\pm & = \Phi(\eta_\pm)\\
c_0 & = \alpha/(\alpha+1)+F-1,\quad\text{and} & c_1 &= 1/(\alpha+1).
\end{align*}
Lemma~\ref{lem:dparts} in the Appendix gives these identities:
\begin{align*}
\e( 1_{|D|\ge c}) & = \Phi_+ + \Phi_-\\
\e( D1_{|D|\ge c} )
& = 
\delta(\Phi_++\Phi_-) + \tau(\varphi_+-\varphi_-) 
\\
\e( D^21_{|D|\ge c} )
& = 
(\delta^2+\tau^2)(\Phi_++\Phi_-)
+\tau c(\varphi_++\varphi_-)
+\tau \delta(\varphi_+-\varphi_-)
\\
\e( \sign(D)1_{|D|\ge c} ) & = \Phi_+-\Phi_-,\quad\text{and}
\\
\e( D\,\sign(D)1_{|D|\ge c} ) & = 
\delta(\Phi_+ -\Phi_-)
+\tau( \varphi_+ +\varphi_-).
\end{align*}
Section~\ref{sec:proofl1orcl} of the Appendix shows that
\begin{align}\label{eq:l1loss}
\begin{split}
\e((\hat\mu-\mu)^2)
& =
F^2\delta^2
+c_0^2\tau^2 + c_1^2(\alpha^2\sigma^2_S/n + \sigma^2_B/N)\\
& \quad +(\lambda/n)^2(\Phi_++\Phi_-) - 2c_1\delta F\e\bigl(D\onedc)\\
&\quad + F(F-2c_0)\e\bigl(D^2\onedc\bigr)\\
&\quad + 2c_1\delta(\lambda/n)\e\bigl(\sign(D)\onedc\bigr)\\
&\quad + 2(\lambda/n)(c_0-F)\e\bigl(D\,\sign(D)\onedc\bigr).
\end{split}
\end{align}
Substituting the quantities above into~\eqref{eq:l1loss} yields
a computable expression for the loss in the $L_1$ penalized case.

\section{Numerical examples}\label{sec:simu}

We have simulated some special cases
of the data enrichment problem.
First we simulate the pure location problem
which has $d=1$.
Then we consider  the regression problem with varying $d$.

\subsection{Location}\label{sec:locationnumer}

We simulated Gaussian data for the location
problem. The large sample had $N=1000$
observations and the small sample had $n=100$
observations:
$X_i\sim\dnorm(\mu,\sigma^2_S)$ for $i\in S$
and
$X_i\sim\dnorm(\mu+\delta,\sigma^2_B)$ for $i\in B$.
Our data had $\mu=0$ and $\sigma^2_S=\sigma^2_B=1$.
We define the relative bias as
$$
\delta_*=\frac{|\delta|}{\sigma_S/\sqrt{n}} = \sqrt{n}|\delta|.
$$
We investigated a range of relative bias values.
It is only a small simplification to take $\sigma^2_S=\sigma^2_B$.
Doubling $\sigma^2_B$ has a very similar effect to halving $N$.
Equal variances might have given a slight relative advantage to a
hypothesis testing method as described below.

The accuracy of our estimates is judged by
the relative mean squared error
$\e( (\hat\mu-\mu)^2)/(\sigma^2_S/n)$.
Simply taking $\hat\mu =\bar Y_S$ attains
a relative mean squared error of $1$.

Figure~\ref{fig:locationresults} plots relative mean squared error
versus relative bias for a collection of estimators, with the
results averaged over $10{,}000$ simulated data sets.
We used the small sample only method as a control
variate.

\begin{figure}[t]\centering
\includegraphics[width=.9\hsize]
{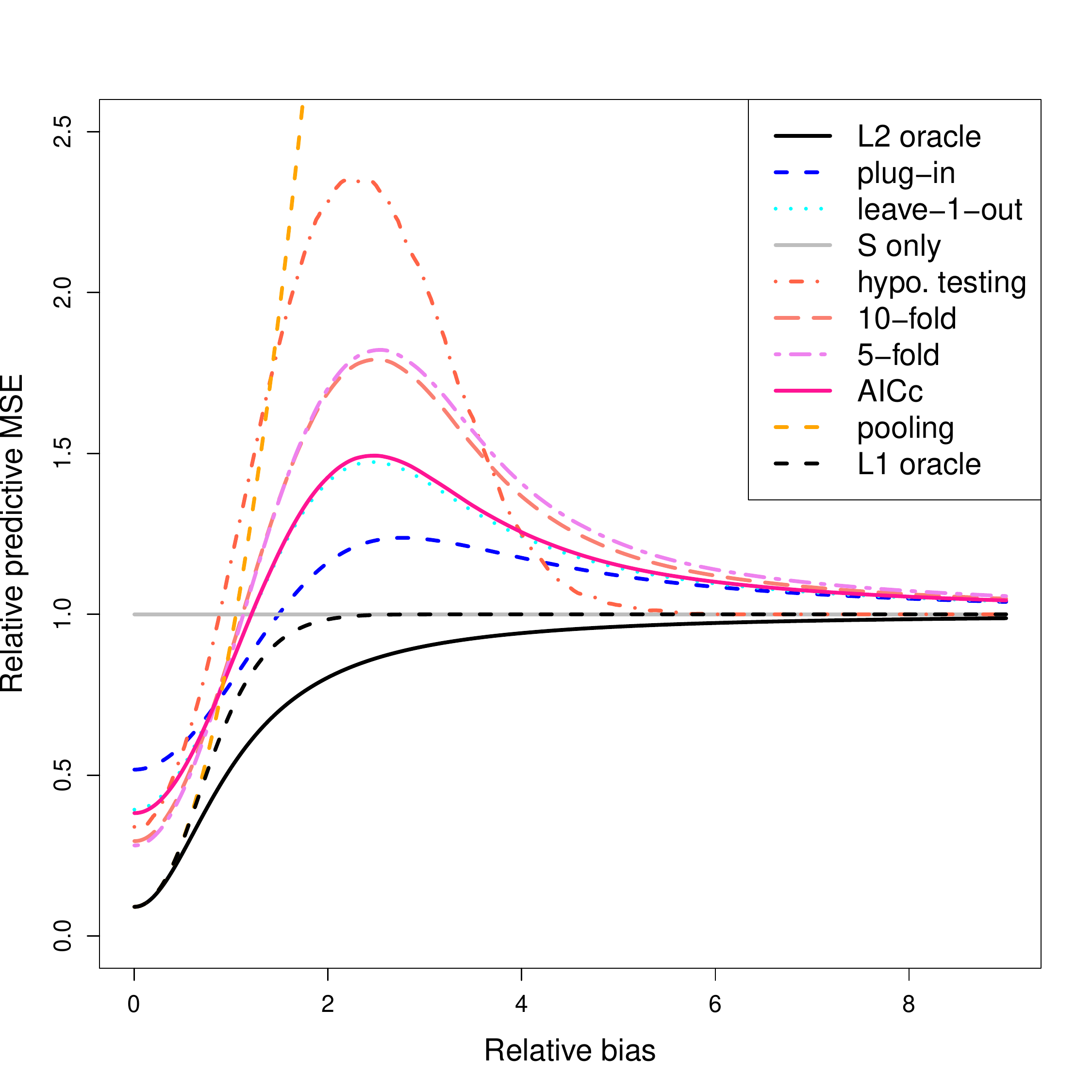}
\caption{\label{fig:locationresults}
Numerical results for the location problem. The horizontal
line at $1$ represents using the small sample only
and ignoring the large one.  The lowest line shown is
for an oracle choosing $\lambda$ in the $L_2$ penalization.
The dashed black curve shows an oracle using the $L_1$ penalization.
The other curves are as described in the text.
}
\end{figure}

The solid curve in Figure~\ref{fig:locationresults}
shows the oracle's value.  It lies strictly below
the horizontal $S$-only line.  None of the competing
curves lie strictly below that line. None can because
$\bar Y_S$ is an admissible estimator for $d=1$
\citep{stei:1956}.
The second lowest curve in Figure~\ref{fig:locationresults}
is for the oracle using the $L_1$ version of the penalty.
The $L_1$ penalized oracle is not as effective as the
$L_2$ oracle and it is also more difficult to approximate.
The highest observed predictive MSEs come from a method
of simply pooling the two samples.  That method is
very successful when the relative bias is near zero but
has an MSE that becomes unbounded as the relative bias
increases.

Now we discuss methods that use the data
to decide whether to use the small sample
only, pool the samples or choose an amount
of shrinkage. 
We may list them in order of their worst
case performance. From top (worst) to
bottom (best) in Figure~\ref{fig:locationresults}
they are: hypothesis testing, $5$-fold cross-validation,
$10$-fold cross-validation, AICc, 
leave-one-out cross-validation, 
and then the simple plug-in method which is minimax among
this set of choices.  AICc and leave-one-out are very close.
Our cross-validation estimators used $\omega =
\max(\omega_{\cv,K},n/(n+N))$
where $\omega_{\cv,K}$ is given by~\eqref{eq:cvk}.

The hypothesis testing method is based on a two-sample $t$-test
of whether $\delta=0$. If the test is rejected at $\alpha = 0.05$,
then only the small sample data is used. If the test is not
rejected, then the two samples are pooled. That test was based
on $\sigma^2_B=\sigma^2_S$ which may give hypothesis
testing a slight advantage in this setting (but it still performed
poorly).

The AICc method performs virtually identically to leave-one-out
cross-validation over the whole range of relative biases.

None of these methods makes any other one
inadmissible: each pair of curves crosses.
The methods that do best at large relative biases
tend to do worst at relative bias near $0$ and
vice versa.  The exception is hypothesis testing.
Compared to the others 
it does not benefit fully from low relative bias
but it recovers the quickest as the bias increases.
Of these methods hypothesis testing
is best at the highest relative bias,
$K$-fold cross-validation with small $K$
is best at the lowest relative bias,
and  the plug-in method is best in between.

Aggressive methods will do better at low bias
but worse at high bias.  What we see in this simulation
is that $K$-fold cross-validation is the most aggressive
followed by leave-one-out and AICc and that plug-in
is least aggressive.  These findings confirm what we
saw in the formulas from Section~\ref{sec:mean}. 
Hypothesis testing does not quite
fit into this spectrum: its worst case performance
is much worse than the most aggressive methods
yet it fails to fully benefit from pooling when the
bias is smallest.  Unlike aggressive methods it does
very well at high bias.

\subsection{Regression}

We simulated our data enrichment method for
the following scenario. 
The small sample had $n=1000$ observations
and the large sample had  $N=10{,}000$. 
The true $\beta$ was taken to be $0$.
This is no loss of generality because we are not
shrinking $\beta$ towards $0$.
The value of $\gamma$ was taken uniformly
on the unit sphere in $d$ dimensions and then
multiplied by a scale factor that we varied.

We considered $d=2,4,5$ and $10$. All of our
examples included an intercept column of $1$s
in both $X_S$ and $X_B$. The other $d-1$ predictors
were sampled from a Gaussian distribution
with covariance $C_S$  or $C_B$, respectively.

In one simulation we took $C_S$ and $C_B$
to be independent Wishart$(I,d-1,d-1)$ random matrices.
In the other simulation, they were sampled as a spiked
covariance model \citep{john:2001}.
There
$C_S = I_{d-1} + \rho uu^\tran$ and
$C_B = I_{d-1} +\rho vv^\tran$
where $u$ and $v$ are independently
and uniformly sampled from the unit sphere in $\real^{d-1}$
and $\rho\ge0$ is a parameter that measures
the lack of proportionality between covariances. 
We chose $\rho=d$ so that
the sample specific portion of the variance
has comparable magnitude to the common part.

The variance in the small sample was $\sigma^2_S=1$.
To model the lower quality of the  large sample we
used $\sigma^2_B=2$.

We scaled the results so that regression using
sample $S$ only yields a mean squared error
of $1$.
We computed the risk of an $L_2$ oracle,
as well as sampling errors when $\lambda$
is estimated by the plug-in formula, by our
bias-adjusted plug-in formula and via AICc.
In addition we considered the simple weighted
combination $\omega\hat\beta_S+(1-\omega)\hat\beta_B$
with $\omega$ chosen by the plug-in formula.
To optimize~\eqref{eq:plug-in-lambda} over $\lambda$
we used the optimize function in R which is based
on golden section search \citep{bren:1973}.

\begin{figure}[t]
\centering 
\includegraphics[width=0.99\hsize]
{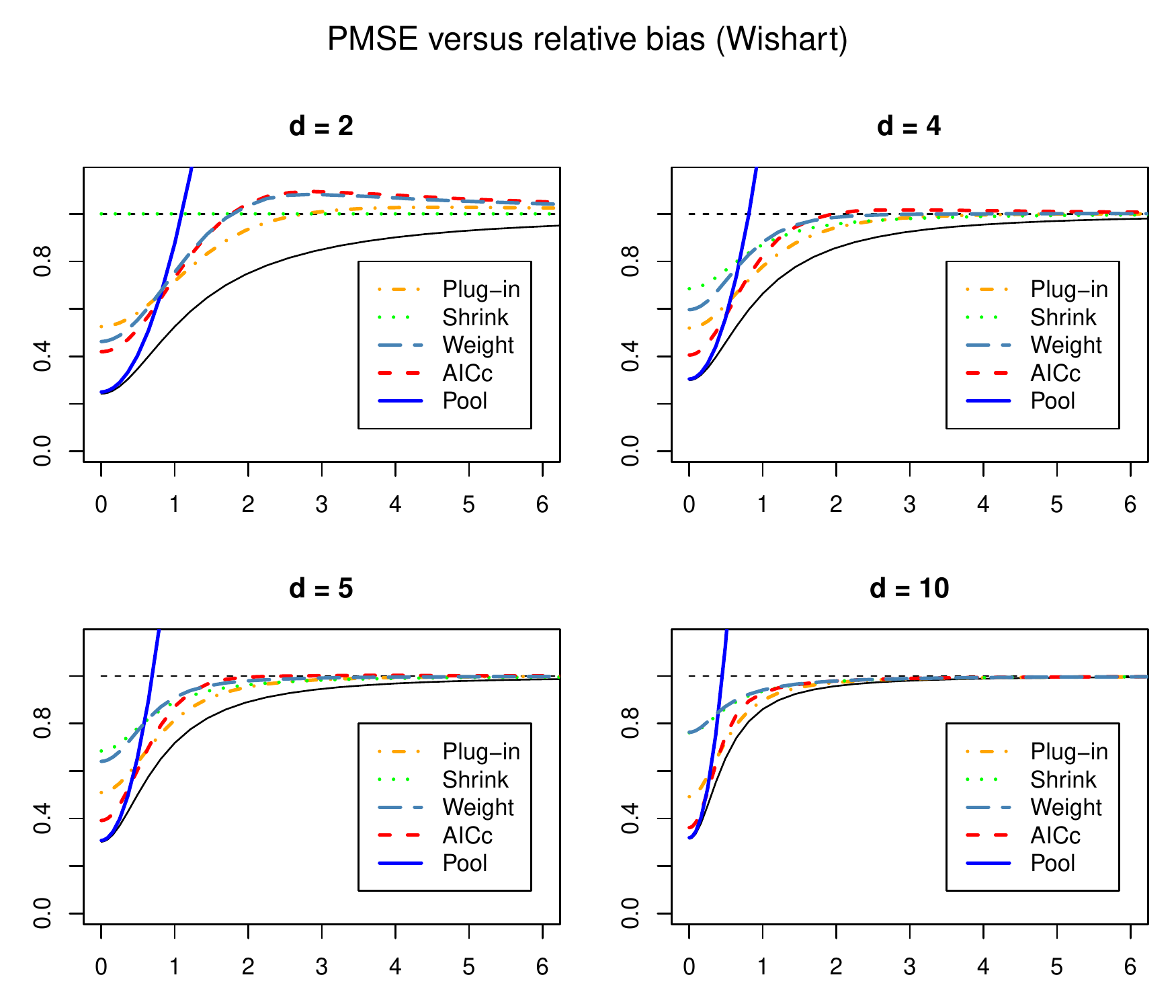}
\caption{\label{fig:regresults0}
Predicted MSE versus
relative bias for the Wishart covariances
described in the text. 
On this scale the small
sample only has MSE one (horizontal dashed line).
Five methods are shown. The lowest curve is for
the oracle. 
}
\end{figure}

\begin{figure}[t]
\centering 
\includegraphics[width=0.99\hsize]
{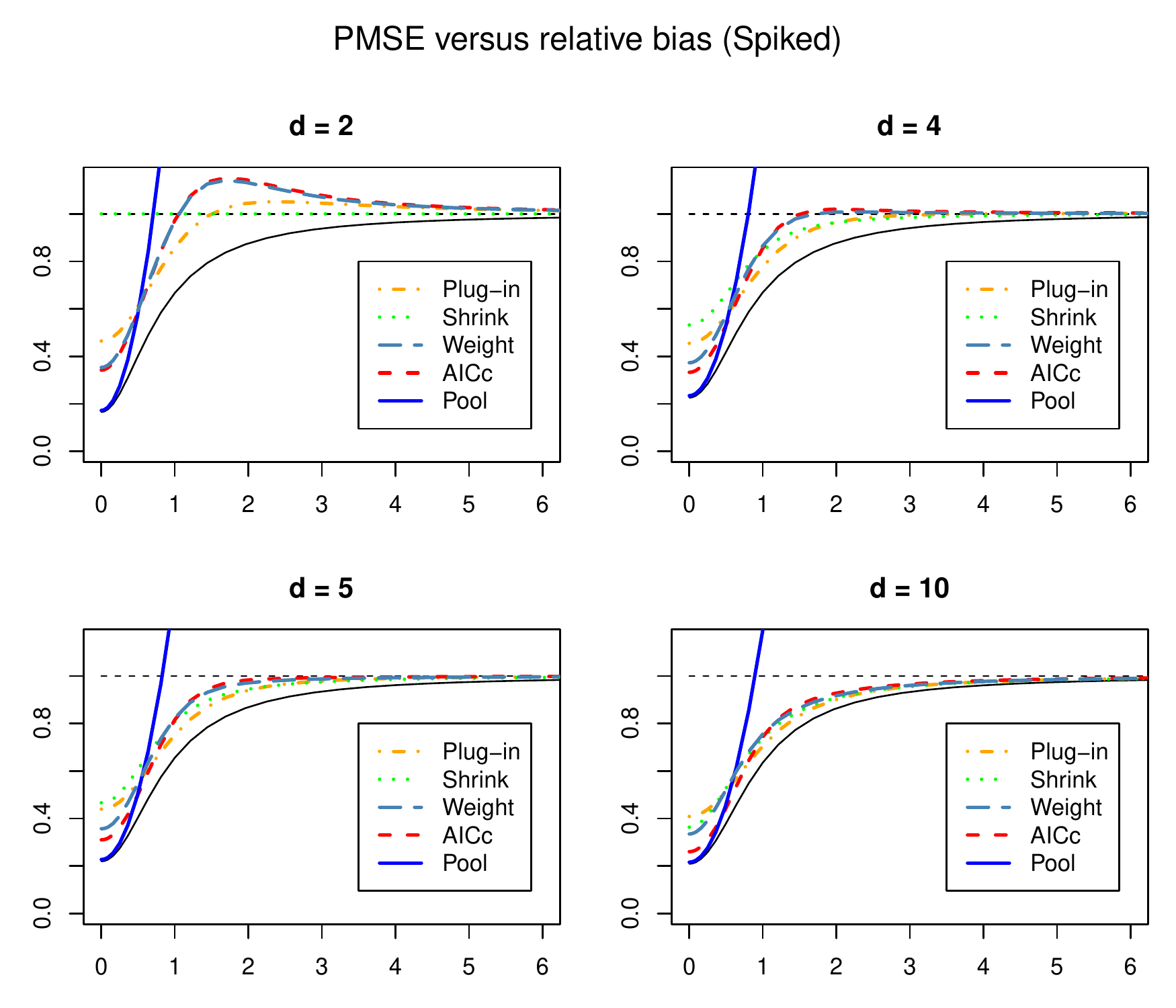}
\caption{\label{fig:regresults1}
Predicted MSE versus
relative bias for the spiked covariances
described in the text. 
On this scale the small
sample only has MSE one (horizontal dashed line).
Five methods are shown. The lowest curve is for
the oracle. 
}
\end{figure}

We also included a shrinkage estimator.
Because our simulated runs all had $\beta_S=0$
it is not reasonable to include shrinkage
of $\hat\beta_S$ towards zero in the comparison;
we cannot in practice shrink towards the truth.
Instead, we used the positive part Stein shrinkage
estimate~\eqref{eq:jsbp}
shrinking $\hat\beta_S$ towards $\hat\beta_B$ but not past it.
That shrinkage requires an estimate $\hat\sigma^2_S$
of $\sigma^2_S$.
We used the true value, $\sigma^2_S=1$, giving the
shrinkage estimator a slight advantage.

We did not include hypothesis testing in this example,
because there are $2^d$ possible ways to decide which
parameters to pool and which to estimate separately.

We simulated each of the two covariance models
with each of the four dimensions $10{,}000$ times.
For each method we averaged the squared prediction
errors $(\hat\beta-\beta)^\tran V_S(\hat\beta-\beta)$
and then divided those mean squared errors by
the one for using the small sample only.
Figures \ref{fig:regresults0} and~\ref{fig:regresults1}
show the results. 

At small bias levels pooling the samples is almost
as good as the oracle.  But the loss for pooling samples
grows without bound when the bias increases.
For $d=2$, shrinkage amounts to using the small sample
only but for $d>2$ it performs universally better
than the small sample.

When comparing methods we see that the curves usually
cross. Methods that are best at low bias tend not to
be best at high bias. Note however that there is a lot
to gain at low bias while the methods differ only a little
at high bias.  As a result the more aggressive methods
making greater use of the large data are more likely to 
yield a big improvement.

The weighting estimator generally performs
better than the shrinkage estimator in that it
offers a meaningful improvement at low bias
costing a minor relative loss at high bias. We analyze
that estimator in Section~\ref{sec:prop}.
The plug-in estimator also is generally better
than shrinkage. The AICc estimator is generally
better than both of those. We do not show the
bias adjusted plug-in estimators.  Their performance
is almost identical to AICc (always within about $0.015$).
Of those, the one using $\wh\Theta_{\mathrm{bapi}}$
was consistently at least as good as $\wt\Theta_{\mathrm{bapi}}$
and sometimes a little better.
 
The greatest gains are at or near zero bias.
Table~\ref{tab:atzero} shows the quadratic losses 
at $\delta=0$ normalized by the loss attained by the
oracle. Pooling is almost as good as the oracle in this
case but we rule it out because of its extreme bad performance
when the bias is large. Some of our new estimators yield
very much reduced squared error compared to the shrinkage
estimator. For example three of the new methods' squared errors are just less
than half that of the shrinkage estimator for $d=10$
and the Wishart covariances.

\newcommand{\therowspace}{1pt}
\begin{table}
\caption{\label{tab:atzero}
This table shows the quadratic loss~\eqref{eq:theloss} 
normalized by that of the oracle when $\delta=0$ (the no bias condition).
The methods are described in the text. 
}\centering
\begin{tabular}{l|cccc|cccc}
\toprule 
 & \multicolumn{4}{c}{Wishart} &
\multicolumn{4}{c}{Spiked}\\
Method & 2 & 4 & 5 & 10 & 2 & 4 & 5 & 10\\
\midrule
[\therowspace]
Oracle & 1.00 & 1.00 & 1.00 & 1.00 & 1.00 & 1.00 & 1.00 & 1.00\\
[\therowspace]
Pool & 1.03 & 1.02 & 1.02 & 1.01 & 1.04 & 1.04 & 1.03 & 1.03\\
[\therowspace]
Small only & 4.13 & 3.34 & 3.31 & 3.18 & 6.04 & 4.43 & 4.55 & 4.78\\
[\therowspace]
Shrink & 4.13 & 2.29 & 2.27 & 2.42 & 6.04 & 2.35 & 2.12 & 1.74\\
[\therowspace]
Weighting & 1.91 & 1.99 & 2.12 & 2.43 & 2.13 & 1.65 & 1.62 & 1.60\\
[\therowspace]
${\widetilde\Theta}_{\mathrm{plug}}$ & 2.17 & 1.73 & 1.69 & 1.56 & 2.80 & 2.01 & 2.00 & 1.95\\
[\therowspace]
${\widetilde\Theta}_{\mathrm{bapi}}$ & 1.77 & 1.39 & 1.33 & 1.19 & 2.13 & 1.52 & 1.47 & 1.31\\
[\therowspace]
${\widehat\Theta}_{\mathrm{bapi}}$ & 1.76 & 1.39 & 1.33 & 1.19 & 2.12 & 1.51 & 1.45 & 1.30\\
[\therowspace]
AICc & 1.73 & 1.35 & 1.30 & 1.15 & 2.06 & 1.47 & 1.41 & 1.24\\
[\therowspace]
\bottomrule
\end{tabular}
\end{table}
 
\section{Inadmissibility}\label{sec:prop}

Section~\ref{sec:simu} gives empirical support
for our proposal. Several of the estimators 
perform better than ordinary shrinkage.
In this section we provide some theoretical support.
We provide a data enriched
estimator that makes least squares on the small
sample inadmissible.  The estimator is derived
for the proportional design case but inadmissibility
holds even when $V_B = X_B^\tran X_B$ is not proportional
to $V_S =X_S^\tran X_S$. The  inadmissibility is
with respect to a loss function 
$\e( \Vert X_T(\hat\beta-\beta) \Vert^2)$
where $V_T = X_T^\tran X_T$ is proportional to $V_S$.

To motivate the estimator, suppose for the moment that
$V_B= N\Sigma$, $V_S=n\Sigma$ and $V_T=\Sigma$
for a positive definite matrix $\Sigma$.
Then the weighting matrix $W_\lambda$ in
Lemma~\ref{lem:matrixweights} simplifies to 
$W_\lambda = \omega I$
where 
$\omega = (N+n\lambda)/(N+n\lambda+N\lambda)$.
As a result $\hat\beta = \omega\hat\beta_S+(1-\omega)\hat\beta_B$
and we can find and estimate an oracle's value for $\omega$.

We show below that the resulting estimator of $\hat\beta$ with
estimated $\omega$ dominates
$\hat\beta_S$ (making it inadmissible) under mild conditions
that do not require $V_B\propto V_S$.
We do need
the model degrees of freedom to be at least $5$,
and it will suffice to have the error degrees of freedom in the small
sample regression be at least $10$.
The result also requires a Gaussian assumption
in order to use a lemma of Stein's. 

Write
$Y_S =  X_{S}\beta+\err_{S}$
and $Y_B  =  X_{B}(\beta+\gamma)+\err_{B}$
for $\err_{S}\simiid \dnorm(0,\sigma_{S}^{2})$ and $\err_{B}\simiid 
\dnorm(0,\sigma_{B}^{2})$.
The mean squared prediction error of 
$\omega\hat\beta_S + (1-\omega)\hat\beta_B$ is 
\begin{align*}
f(\omega) & = \e(\Vert X_T(\hat{\beta}(\omega)-\beta)\Vert^2)\\
 & = \tr((\omega^{2}\sigma^2_S V_{S}^{-1}+(1-\omega)^{2}(\gamma\gamma^{\tran}+
\sigma^2_B V_{B}^{-1}))\Sigma).
\end{align*}
This error is minimized by the oracle's parameter value
\begin{align*}
\omega_\orcl & = \frac{\tr((\gamma\gamma^{\tran}+\sigma^2_B V_{B}^{-1})\Sigma)}
{\tr((\gamma\gamma^{\tran}+\sigma^2_B V_{B}^{-1})\Sigma)+\sigma_{S}^{2}\tr(V_{S}^{-1}\Sigma)}.
\end{align*}

When $V_{S}=n\Sigma$ and $V_{B}=N\Sigma$, we find
\begin{align*}
\omega_{\orcl} & = \frac{\gamma^{\tran}\Sigma\gamma+d\sigma_{B}^{2}/N}
{\gamma^{\tran}\Sigma\gamma+d\sigma_{B}^{2}/N+d\sigma_{S}^{2}/n}.
\end{align*}
The plug-in estimator is
\begin{align}\label{eq:plug-in-weight}
\hat
\omega_{\plug} & = \frac{\hat\gamma^{\tran}\Sigma\hat\gamma+d\hat\sigma_{B}^{2}/N}
{\hat\gamma^{\tran}\Sigma\hat\gamma+d\hat\sigma_{B}^{2}/N+d\hat\sigma_{S}^{2}/n}
\end{align}
where $\hat{\sigma}_{S}^{2}={\Vert Y_S -X_{S}\hat\beta_{S}\Vert^2}/{(n-d)}$
and $\hat{\sigma}_{B}^{2}={\Vert Y_{B}-X_{B}\hat\beta_{B}\Vert^2}/{(N-d)}$.
To allow a later bias adjustment,
we generalize this plug-in estimator.
Let $h(\hat\sigma^2_B)$ be any nonnegative measurable function of
$\hat\sigma^2_B$ with $\e(h(\hat\sigma^2_B))<\infty$.
The generalized plug-in estimator is
\begin{align}\label{eq:gpi}
\hat
\omega_{\plug,h} & = \frac{\hat\gamma^{\tran}\Sigma\hat\gamma+h(\hat\sigma_{B}^{2})}
{\hat\gamma^{\tran}\Sigma\hat\gamma+h(\hat\sigma_{B}^{2})+d\hat\sigma_{S}^{2}/n}.
\end{align}

Here are the conditions under which $\hat\beta_S$ is
made inadmissible by the data enrichment estimator.

\begin{theorem}\label{thm:plug-in}
Let $X_S\in\real^{n\times d}$ and $X_B\in\real^{N\times d}$
be fixed matrices
with $X_{S}^{\tran}X_{S}=n\Sigma$ and $X_{B}^{\tran}X_{B}=V_B$
where $\Sigma$ and $V_B$ both have rank $d$.
Let $Y_S\sim\dnorm(X_S\beta,\sigma^2_SI_n)$
independently of $Y_B\sim\dnorm(X_B(\beta+\gamma),\sigma^2_BI_N)$.
If $d\ge 5$ and $m\equiv n-d\geq10$, then 
\begin{align}
\e(\Vert X_T\hat\beta(\hat\omega)-X_T\beta\Vert^2) 
& < \e(\Vert X_T\hat\beta_{S}-X_T\beta\Vert^2)
\label{eq:theory}
\end{align}
holds for any matrix $X_T$ with $X_T^\tran X_T = \Sigma$
and any $\hat\omega=\hat\omega_{\plug,h}$ given by~\eqref{eq:gpi}.
\end{theorem}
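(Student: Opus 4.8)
The plan is to show that the risk difference $\e(\Vert X_T(\hat\beta(\hat\omega)-\beta)\Vert^2)-\e(\Vert X_T(\hat\beta_S-\beta)\Vert^2)$ is strictly negative, exploiting that $\hat\beta_S$ is exactly the $\omega=1$ member of the family. Since $\hat\beta(\omega)-\beta=(\hat\beta_S-\beta)+(1-\omega)\hat\gamma$ with $\hat\gamma=\hat\beta_B-\hat\beta_S$, and $X_T^\tran X_T=\Sigma$, the loss expands as $\Vert X_T(\hat\beta_S-\beta)\Vert^2+2(1-\omega)(\hat\beta_S-\beta)^\tran\Sigma\hat\gamma+(1-\omega)^2\hat\gamma^\tran\Sigma\hat\gamma$. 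Hence the difference equals $\e[2(1-\hat\omega)(\hat\beta_S-\beta)^\tran\Sigma\hat\gamma]+\e[(1-\hat\omega)^2\hat\gamma^\tran\Sigma\hat\gamma]$, in which the second term is nonnegative, so the entire improvement must be extracted from the (negative) cross term.

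First I would pass to isotropic coordinates $W=\Sigma^{1/2}\hat\beta_S$ and $P=\Sigma^{1/2}\hat\beta_B$, so $W\sim\dnorm(\Sigma^{1/2}\beta,(\sigma_S^2/n)I_d)$ and the bias norm becomes $A\equiv\hat\gamma^\tran\Sigma\hat\gamma=\Vert P-W\Vert^2$. Setting $B\equiv h(\hat\sigma_B^2)$ and $C\equiv d\hat\sigma_S^2/n$ gives $1-\hat\omega=C/(A+B+C)$. The crucial structural point under the Gaussian model is decoupling: $\hat\beta_S\perp\hat\sigma_S^2$ and $\hat\beta_B\perp\hat\sigma_B^2$ within each sample, the two samples are independent, and $(n-d)\hat\sigma_S^2/\sigma_S^2\sim\chi^2_m$, so that $A$, $B$, and $C$ are mutually independent. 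The assumption $\e(h(\hat\sigma_B^2))<\infty$ guarantees the integrability needed below.

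Next I would evaluate the cross term by Stein's lemma applied to $W$, conditioning on $(P,\hat\sigma_S^2,\hat\sigma_B^2)$, which are independent of $W$. With $f(W)=2(1-\hat\omega)(P-W)$ a direct divergence computation gives $\nabla_W\cdot f=4CA/(A+B+C)^2-2dC/(A+B+C)$, so the cross term contributes $(\sigma_S^2/n)\,\e[\,4CA/(A+B+C)^2-2dC/(A+B+C)\,]$. Adding the quadratic term $\e[C^2A/(A+B+C)^2]$ collapses the risk difference into one expectation whose sign we must control. The remaining and genuinely delicate step is the integration over $\hat\sigma_S^2$: because $C$ occurs in both numerator and denominator, I would invoke the chi-squared identity $\e[T\psi(T)]=m\e[\psi(T)]+2\e[T\psi'(T)]$ for $T\sim\chi^2_m$ to trade the numerator factor $C$ for the error degrees of freedom $m$. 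Since $h$ is arbitrary and nonnegative, I would fix $B=b\ge0$ and prove $\e_{A,C}[\cdot]<0$ for every $b$, so that averaging over $\hat\sigma_B^2$ preserves the sign.

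The main obstacle is exactly this final sign analysis. A crude relaxation $A/(A+B+C)^2\le 1/(A+B+C)$ followed by replacing $\hat\sigma_S^2$ by its mean reduces negativity to the condition $d(m-2)>4m$, which holds with \emph{equality} at $d=5$, $m=10$; thus the crude bound is too lossy and the strict inequality on the boundary can only come from retaining the exact dependence of the denominator on $C$. The favorable mechanism is that the effective numerator $C\bigl(C-(2d-4)\sigma_S^2/n\bigr)$ is negative for small $\hat\sigma_S^2$ and positive for large $\hat\sigma_S^2$, while the $C$ in the denominator down-weights precisely the positive, large-$\hat\sigma_S^2$ region; quantifying this asymmetry through the chi-squared identity is what sharpens the threshold to $d\ge5$ and $m\ge10$ and yields the strict inequality uniformly over all admissible $h$.
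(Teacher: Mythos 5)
Your setup is correct and is in substance the paper's own route: the decomposition of the risk difference into a quadratic term plus a cross term, the isotropic coordinates, the mutual independence of $A$, $B$, $C$, and the Stein's-lemma divergence $4CA/(A+B+C)^2-2dC/(A+B+C)$ reproduce exactly the content of the paper's Lemma~\ref{lem:pmse}. The genuine gap is that the step you yourself call ``the main obstacle'' is never executed, and it is the entire quantitative content of the theorem. After your reductions, what must be shown is that, conditionally on $A$ and $B$, writing $C=(d\sigma_S^2/(nm))Q$ with $Q\sim\chi^2_{(m)}$ and a constant $D>0$ collecting $A$ and $B$,
\begin{equation*}
\e\Bigl(\frac{Q\,\bigl(2-4/d-Q/m\bigr)}{Q+D}\Bigr)\;\ge\;0 .
\end{equation*}
The paper proves this in Lemma~\ref{lem:chisq} via the density identity $x\,p_m(x)=m\,p_{m+2}(x)$, which converts the left side into $m\,\e\bigl((2-4/d-Q'/m)/(Q'+D)\bigr)$ with $Q'\sim\chi^2_{(m+2)}$, and then Jensen's inequality for the convex map $x\mapsto(2-4/d-x/m)/(x+D)$, yielding the lower bound $(m(1-4/d)-2)/(m+2+D)$, which is $\ge 0$ exactly when $m(1-4/d)\ge2$, i.e.\ with equality at $d=5$, $m=10$. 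The identity you propose instead, $\e(T\psi(T))=m\,\e(\psi(T))+2\,\e(T\psi'(T))$, is true, but it does not set up that Jensen step: with $\psi(x)=(2-4/d-x/m)/(x+D)$ it gives $\e(Q\psi(Q))=m\e(\psi(Q))-2(2-4/d+D/m)\,\e\bigl(Q/(Q+D)^2\bigr)$, and bounding $\e(\psi(Q))$ by Jensen at the $\chi^2_{(m)}$ mean produces, at $D=0$, $d=5$, $m=10$, the lower bound $0.2-0.3=-0.1<0$ even though the true value is $+0.2$. So the argument as sketched does not close at the boundary; the shift of degrees of freedom from $m$ to $m+2$ is essential, and your proposal stops exactly where that work begins.

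You also misplace the source of the strict inequality. You argue that because the crude chain gives equality at $d=5$, $m=10$, the relaxation $A/(A+B+C)^2\le1/(A+B+C)$ must be abandoned. The paper keeps it: its slack, $\e\bigl((C^2+4C\sigma_S^2/n)(B+C)/(A+B+C)^2\bigr)$, is strictly positive almost surely (this is precisely the term dropped in Lemma~\ref{lem:pmse}, where the inequality is already strict), and that slack alone supplies the strictness in~\eqref{eq:theory}; the chi-square term then only needs to be $\ge0$, which is exactly what holds on the boundary. What must be kept exact is the dependence of the denominator on $C$ when integrating over $\hat\sigma_S^2$ --- on that point you are right --- but discarding the $A$-dependence is harmless and is in fact what makes the boundary case provable. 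To repair your proof: retain the relaxation and use its strictly positive slack for strictness, and replace your integration-by-parts identity with the density-shift-plus-Jensen argument of Lemma~\ref{lem:chisq} (or an equivalent bound) to conclude the nonnegativity of the remaining chi-square expectation.
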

\begin{proof}
 Section~\ref{sec:proofplugin} in the Appendix.
\end{proof}

The condition on $m$ can be relaxed at the
expense of a more complicated statement.
From the details in the proof, it suffices to have
$d\ge5$ and $m(1-4/d)\ge2$.


Because
$\e(\hat{\gamma}^\tran\Sigma\hat{\gamma})
>\gamma^\tran\Sigma\gamma$
we find that $\hat\omega_\plug$ is biased upwards, making it conservative. 
In the proportional design case
we find that the bias is $d\sigma_{S}^{2}/n+d\sigma_{B}^{2}/N$. 
That motivates a bias adjustment, replacing
$\hat\gamma^\tran\Sigma\hat\gamma$
by $\hat\gamma^\tran\Sigma\hat\gamma-d\hat\sigma^2_S/n-d\hat\sigma^2_B/N$.
The result is
\begin{align}\label{eq:bapi-prop}
\hat\omega_{\bapi} 
& = 
\frac{\hat{\gamma}^{\tran}\Sigma\hat\gamma -d\hat{\sigma}_{S}^{2}/n}
{\hat{\gamma}^{\tran}\Sigma\hat{\gamma}}\vee \frac{n}{n+N},
\end{align}
where values below $n/(n+N)$ get rounded up.
This bias-adjusted estimate of $\omega$ is not
covered by Theorem~\ref{thm:plug-in}.
Subtracting only $\hat\sigma^2_B/N$
instead of $\hat\sigma^2_B/N +\hat\sigma^2_S/n$
is covered, yielding
\begin{align}\label{eq:bapigaur}
\hat\omega_{\bapi} '
& = 
\frac {\hat{\gamma}^{\tran}\Sigma\hat{\gamma}}
{\hat{\gamma}^{\tran}\Sigma\hat\gamma +d\hat{\sigma}_{S}^{2}/n},
\end{align}
which corresponds to taking $h(\hat\sigma^2_B)\equiv 0$
in equation~\eqref{eq:gpi}.
Data enrichment with $\hat\omega$ given by~\eqref{eq:bapi-prop}
makes $\hat\beta_S$ inadmissible whether or not the
motivating covariance proportionality holds.

\section{A matrix oracle}\label{sec:moracle}

In this section we look for an explanation
of how data enrichment might be more accurate
than Stein shrinkage.  We generalize our estimator
to 
$$
\hat\beta(W) = W\hat\beta_S + (I-W)\hat\beta_B
$$
and then find the optimal matrix $W$.

\begin{theorem}\label{thm:bestw}
Let $\hat\beta_S\in\real^d$ have mean $\beta$
and covariance matrix $\sigma^2_SV_S^{-1}$ for $\sigma_S>0$.
Let $\hat\beta_B\in\real^d$ be independent of $\hat\beta_S$,
with mean $\beta+\gamma$
and covariance matrix $\sigma^2_BV_B^{-1}$ for $\sigma_B>0$.
Let $\hat\beta(W) = W\hat\beta_S + (I-W)\hat\beta_B$
for a matrix $W\in\real^{d\times d}$.
Let $V_T\in\real^{d\times d}$ be any positive definite
symmetric matrix. Then 
$\e( (\hat\beta(W)-\beta)^\tran V_T (\hat\beta(W)-\beta) )$
is minimized at
\begin{align}\label{eq:bestw}
W= 
(\gamma\gamma^\tran 
+\sigma^2_BV_B^{-1}
+\sigma^2_SV_S^{-1})^{-1}(\gamma\gamma^\tran 
+\sigma^2_BV_B^{-1}).
\end{align}
\end{theorem}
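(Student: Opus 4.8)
The plan is to turn the risk into an explicit quadratic function of the matrix $W$, minimize it by matrix calculus, and confirm the stationary point is the global optimum using convexity. First I would split the error into a variance part and a bias part. Writing $\hat\beta_S=\beta+e_S$ and $\hat\beta_B=\beta+\gamma+e_B$, where $e_S,e_B$ are independent and centered with covariances $\sigma^2_S V_S^{-1}$ and $\sigma^2_B V_B^{-1}$, substitution gives
\[
\hat\beta(W)-\beta=W e_S+(I-W)e_B+(I-W)\gamma .
\]
Since $e_S$ and $e_B$ are independent and mean zero, every cross term in $\e\bigl((\hat\beta(W)-\beta)^\tran V_T(\hat\beta(W)-\beta)\bigr)$ vanishes, and applying $\e(x^\tran V_T x)=\tr(V_T\cov(x))+(\e x)^\tran V_T(\e x)$ together with $\gamma^\tran(I-W)^\tran V_T(I-W)\gamma=\tr\bigl(V_T(I-W)\gamma\gamma^\tran(I-W)^\tran\bigr)$ collapses the risk to
\[
f(W)=\tr\bigl(V_T W A W^\tran\bigr)+\tr\bigl(V_T (I-W)B(I-W)^\tran\bigr),
\]
where $A=\sigma^2_S V_S^{-1}$ and $B=\gamma\gamma^\tran+\sigma^2_B V_B^{-1}$; note the squared-bias contribution has been absorbed into the second trace.

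Next I would minimize $f$ over all of $\real^{d\times d}$. Viewed as a function of $\mathrm{vec}(W)$, $f$ is a quadratic whose Hessian is $2\,(A+B)\otimes V_T$, which is positive definite because $V_T\succ0$ and $A+B\succeq\sigma^2_S V_S^{-1}\succ0$. Hence $f$ is strictly convex with a unique minimizer, found by setting the matrix gradient to zero. Differentiating each trace (using that $V_T$, $A$, $B$ are symmetric) gives $\nabla f(W)=2V_TWA-2V_T(I-W)B$, so the first-order condition is $V_TWA=V_T(I-W)B$. Because $V_T$ is invertible this reduces to the linear matrix equation $W(A+B)=B$, and since $A+B\succ0$ it is uniquely solvable for $W$, yielding the weighting matrix stated in~\eqref{eq:bestw}.

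The derivation is essentially routine, so the only places needing care are the matrix differentiation---one must keep track of the symmetric factors so that $\partial_W\tr(V_T WAW^\tran)=2V_TWA$ rather than a transposed version---and the passage from a stationary point to a global minimum, which is exactly what the Kronecker-product Hessian secures. I would also emphasize one structural point: the solution of $W(A+B)=B$ is generically non-symmetric (it is symmetric only when $A$ and $B$ commute), so one should solve the matrix equation directly rather than presuppose a symmetric form for $W$. The invertibility of $A+B$ required for this last step follows from $\sigma_S>0$ and the full rank of $V_S$ (equivalently from $\sigma_B>0$ and the full rank of $V_B$).
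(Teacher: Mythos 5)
Everything in your argument up to and including the first-order condition is correct: the decomposition of the risk into $\tr(V_TWAW^\tran)+\tr\bigl(V_T(I-W)B(I-W)^\tran\bigr)$ with $A=\sigma^2_SV_S^{-1}$ and $B=\gamma\gamma^\tran+\sigma^2_BV_B^{-1}$, the strict convexity via the Hessian $2\,(A+B)\otimes V_T$, and the gradient $2V_TWA-2V_T(I-W)B$ are all right. The gap is in your final identification. The unique solution of $W(A+B)=B$ is $W=B(A+B)^{-1}$, which is \emph{not} the matrix in~\eqref{eq:bestw}: equation~\eqref{eq:bestw} reads $(A+B)^{-1}B$, the transpose of your solution, and the two agree only when $A$ and $B$ commute --- precisely the non-commutativity point you raise yourself in your last paragraph. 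Since you have already shown $f$ is strictly convex, its unique minimizer is the unique stationary point $B(A+B)^{-1}$; hence for non-commuting $A,B$ the matrix in~\eqref{eq:bestw} is not a minimizer at all. A concrete check: take $V_T=I$, $A=\diag(1,2)$, and $B$ with rows $(2,1)$ and $(1,2)$; then the risk at $B(A+B)^{-1}$ equals $187/121$ while at $(A+B)^{-1}B$ it equals $194/121$. So your derivation, carried through consistently, establishes that the minimizer is
\begin{align*}
W=\bigl(\gamma\gamma^\tran+\sigma^2_BV_B^{-1}\bigr)\bigl(\gamma\gamma^\tran+\sigma^2_BV_B^{-1}+\sigma^2_SV_S^{-1}\bigr)^{-1},
\end{align*}
which contradicts, rather than proves, the statement as printed.

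To be clear, the mismatch is not a slip in your algebra; it is inherited from the paper. The paper's own proof writes the bias contribution to $L(W)$ as $\tr(WV_TW^\tran\gamma\gamma^\tran)-2\tr(WV_T\gamma\gamma^\tran)$, i.e.\ $\gamma^\tran WV_TW^\tran\gamma-2\gamma^\tran V_TW^\tran\gamma$, which is the bias incurred by the weight matrix $W^\tran$, not $W$, under the parametrization $\hat\beta(W)=W\hat\beta_S+(I-W)\hat\beta_B$; and its verification of the first-order conditions leans on the assertion that the matrix in~\eqref{eq:bestw} ``is symmetric,'' which, as you correctly observe, fails unless $A$ and $B$ commute. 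So methodologically you improved on the paper --- you derive the optimum from a strictly convex program rather than verify a guessed stationary point --- but a correct write-up must end either with the corrected (transposed) formula above, or with the theorem as printed under the additional hypothesis that $\sigma^2_SV_S^{-1}$ and $\gamma\gamma^\tran+\sigma^2_BV_B^{-1}$ commute, in which case both orderings coincide.
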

\begin{proof}
Section~\ref{sec:matoracle}.
\end{proof}

It is interesting that when we are free to choose the
entire $d\times d$ matrix $W$, then the optimal choice
is the same for all weighting matrices $V_T$.

The penalized least squares criterion~\eqref{eq:penss}
leads to a matrix weighting of the two within-sample
estimators.  The weighting matrix $W_\lambda$ is in
a one dimensional family indexed by $0\le\lambda\le\infty$.
The optimal $W$ from~\eqref{eq:bestw} is not
generally in that family.

Both $W_\lambda$ from criterion~\eqref{eq:penss} and $W$ from~\eqref{eq:bestw} 
trade off bias and variance, through the appearance
$\gamma\gamma^\tran$, $\sigma^2_S$, and $\sigma^2_B$,
which for~\eqref{eq:penss} appear in
the formula for the optimal $\lambda$.
The advantage of working with~$W_\lambda$
instead of $W$ is that $W_\lambda$
yields a one parameter family of candidate weighting
matrices to search over.

When $V_S$ and $V_B$ are both proportional to the
same positive definite matrix $V_T$, then 
the data enrichment oracle chooses $W=\omega I_d$ where
$$\omega = 
\omega_{\orcl}  = \frac{\tr((\gamma\gamma^{\tran}+\sigma^2_B V_{B}^{-1})V_T)}
{\tr([\gamma\gamma^{\tran}+\sigma^2_B V_{B}^{-1}+\sigma_{S}^{2}V_{S}^{-1}]V_T)}
$$
which mimicks the form of the optimal $W$ in equation~\eqref{eq:bestw},
replacing numerator and denominator by traces after multiplying
both by $V_T$.

The James-Stein shrinker chooses $W=\omega_{\js}I_d$ where
$$
\omega_{\js} = 1-\frac{d-2}{\Vert\hat\theta_S-\hat\theta_B\Vert^2/\sigma^2_S}
= 1-\frac{d-2}{\hat\gamma^\tran V_S\hat\gamma/\sigma^2_S}.
$$
If we approximate $\hat\gamma^\tran V_S\hat\gamma$ by
its expectation
$
\tr( (\gamma\gamma^\tran + \sigma^2_SV_S^{-1}+\sigma^2_BV_B^{-1})V_S)
$ 
we find $\omega_{\js}$ centered around
\begin{align*}
\wt\omega_{\js}=
\frac
{\tr( (\gamma\gamma^\tran +\sigma^2_BV_B^{-1})V_S) + 2\sigma^2_S}
{\tr( (\gamma\gamma^\tran + \sigma^2_BV_B^{-1}+\sigma^2_SV_S^{-1})V_S)}.
\end{align*}
The presence of $2\sigma^2_S$ in the numerator leads
the James-Stein approach to make less aggressive 
use of the big data set than data enrichment does.
We believe that this is why the James-Stein method
did not perform well in our simulations.

\section{Related literatures}\label{sec:lite}

There are many disjoint literatures that study
problems like the one we have presented.
They do not seem to have been compared before,
the literatures seem to be mostly unaware of each other,
and there is a surprisingly large variety of problem contexts.
Some quite similar sounding problems turn out to differ
on critically important details.
We give a brief summary of those topics here.

The key ingredient in our problem is that we care more about
the small sample than the large one. Were that
not the case, we could simply pool all the data
and fit a model with indicator variables picking
out one or indeed many different special subsets of interest.
Without some kind of regularization, 
that approach ends up being similar to taking $\lambda=0$
and hence does not borrow strength.

The closest match to our problem setting comes from small area estimation in survey sampling.
The monograph by \cite{rao:2003} is a comprehensive
treatment of that work and \cite{ghos:rao:1994}
provide a compact summary.
In that context the large sample may be census
data from the entire country and the small sample
(called the small area) may be a single county or
a demographically defined subset.
Every county or demographic group may be taken to be the small sample
in its turn.
The composite estimator \cite[Chapter 4.3]{rao:2003}
is a weighted sum of estimators from small and large
samples. The estimates being combined may be
more complicated than regressions, involving for example
ratio estimates.  The emphasis is usually on scalar
quantities such as small area means or totals, instead of
the regression coefficients we consider.
One particularly useful model \cite[equation (4.2)]{ghos:rao:1994} 
allows the small areas to share regression coefficients
apart from an area specific intercept.
Then BLUP estimation methods lead to shrinkage estimators
similar to ours.

Our methods and results are similar to empirical Bayes
methods, drawing heavily on ideas of Charles Stein.  
A Stein-like result also holds for multiple regression
in the context of just one sample.  
We mentioned already the regression shrinkers
of~\cite{copa:1983} and \cite{stei:1960}.
\cite{efro:morr:1973}
find that the Stein effect for shrinking to a common mean
takes place at dimension $4$ and \cite{geor:1986} finds
that the effect
takes place at dimension $3+q$ when shrinking means towards a $q$--dimensional
linear manifold.

A similar problem to ours is addressed by \cite{chen:chen:2000}.
Like us, they have $(X,Y)$ pairs of both high
and low quality. In their setting
both high and low quality pairs are defined
for the same set of individuals.  Their given sample has
all of the low quality data and the high quality data
are available only on a simple random sample of the subjects.

\cite{boon:mukh:tayl:2013} consider a genomics problem where
there are both low and high quality versions of $X$,
from two different technical platforms, but all data share
the same $Y$. All observations have the low quality $X$'s
while a subset have both high and low quality $X$ measurements.
They take a Bayesian approach.
\cite{boon:tayl:mukh:2012} handle the same problem via
shrinkage estimates.
A crucial difference in our setting, is that the subjects 
are completely different in our two samples; no 
$(X,Y)$ pair in one data set comes from the same person
as an $(X,Y)$ pair in the other data set.

\cite{mukh:chat:2008} use shrinkage methods to blend two
estimators.  One is a case-control estimate of a log odds
ratio. The other is a case-only estimator, derived under an
assumption of gene-environment independence.  They also derive
and employ a plug-in estimator. Their target parameter is scalar
so no Stein effect could be expected.
\cite{chen:chat:carr:2009} address the same issue via
$L_1$ and $L_2$ shrinkage based methods, and 
give some asymptotic covariances.


In chemometrics, a calibration transfer problem
\citep{feud:wood:tan:myle:brow:ferr:2002}
comes up when one wants to adjust a model
to new spectral hardware.  There may be a regression
model linking near-infrared spectroscopy data to
a property of some sample material.  
The transfer problem comes
up for data from a new machine.  Sometimes one
can simply run a selection of samples through both
machines but in other cases that is not possible,
perhaps because one machine is remote
\citep{wood:feud:myle:brow:2004}. Their primary
and secondary instruments correspond to our
small and big samples respectively.
Their emphasis is on transfering
either principal components regression or
partial least squares models, not the plain regressions
we consider here.

A common problem in marketing is data fusion,
also known as statistical matching.
Variables $(X,Y)$ are measured in one sample 
while variables $(X,Z)$ are measured in another.
There may or may not be a third sample with
some measured triples $(X,Y,Z)$.
The goal in data fusion is to use all of the data
to form a large synthetic data set of $(X,Y,Z)$
values, perhaps by imputing missing $Z$ for the
$(X,Y)$ sample and/or missing $Y$ for the $(X,Z)$
sample. When there is no $(X,Y,Z)$ sample, some
untestable assumptions must be made about
the joint distribution, because it cannot be recovered
from its bivariate margins. The text
by \cite{dora:dizi:scan:2006} gives a comprehensive
summary of what can and cannot be done.
Many of the approaches are based on methods
for handling missing data \citep{litt:rubi:2009}.

Medicine and epidemiology among other fields use meta-analysis
\citep{bore:hedg:higg:roth:2009}. 
In that setting there are $(X,Y)$ data sets from numerous
environments, no one of which is necessarily of primary importance.

Our problem is an instance of what machine learning
researchers call domain adaptation.  They may have
fit a model to a large data set (the `source') and then wish to adapt
that model to a smaller specialized data set (the `target').  This
is especially common in natural language processing.
NIPS 2011 included a special session on domain adaptation.
In their motivating problems there are typically a very
large number of features (e.g., one per unique word appearing  in
a set of documents). They also pay special attention
to problems where many of the data points do not
have a measured response. Quite often a computer
can gather high dimensional $X$ while a human 
rater is necessary to produce $Y$.
\cite{daum:2009} surveys various wrapper strategies,
such as fitting a model to weighted combinations
of the data sets, deriving features from the reference
data set to use in the target one and so on.
\cite{cort:mohr:2011} 
consider domain adaptation for kernel-based regularization algorithms, including kernel ridge regression, support vector machines (SVMs), or support vector regression (SVR). They prove pointwise loss guarantees depending on the discrepancy distance between the empirical source and target distributions, and demonstrate the power of the approach on a number of experiments using kernel ridge regression.
We have given conditions under which adaptation is always beneficial.

A related term in machine learning is concept
drift~\citep{widm:kuba:1996}. There a prediction method
may become out of date as time goes on.  The term
drift suggests that slow continual changes are anticipated,
but they also consider that there may be hidden contexts
(latent variables in statistical teminology) affecting some
of the data.

\section{Conclusions}\label{sec:conc}

We have studied a middle ground between pooling a large
data set into a smaller target one and ignoring it completely.
Looking at the left side
of Figures~\ref{fig:locationresults},
\ref{fig:regresults0}  and~\ref{fig:regresults1} 
we see that in the low bias cases
the more aggressive methods have a clear advantage. Fortune
favors the bold. Pooling is the boldest and wins the most
when bias is small.
But pooling has unbounded risk as bias increases.
That is, misfortune also favors the bold.
Our shrinkage methods provide a compromise.  In higher dimensional settings 
of Figures~\ref{fig:regresults0}  and~\ref{fig:regresults1}, we see 
that AICc and bias adjusted plug-in gain a lot of efficiency
when the bias is low. When the bias is high, they are squeezed into
a narrow band between the oracle performance and that of $\hat\beta_S$
which ignores the big data set.
As a result, the new methods show large improvements compared
to shrinkage when the bias small but only lose a little
when the bias is large.


\section*{Acknowledgments}

We thank the following people for helpful discussions:
Penny Chu,   Corinna Cortes, Tony Fagan, Yijia Feng, Jerome Friedman, Jim Koehler,
Diane Lambert, Elissa Lee  and Nicolas Remy.  

\bibliographystyle{apalike}
\bibliography{data-enrich}

\section{Appendix I: proofs}\label{sec:proofs}

This appendix presents proofs of the results in
this article. They are grouped into sections by topic,
with some technical supporting lemmas separated
into their own sections.

\subsection{Proof of Theorem~\ref{thm:df}}\label{sec:thmdf}

\begin{proof}
First
$$\df(\lambda)
= \sigma^{-2}_S\tr( \cov(X_S\hat\beta,Y_S))
= \sigma^{-2}_S\tr( X_SW_\lambda (X_S^\tran X_S)^{-1}X_S^\tran\sigma^2_S)
=\tr(W_\lambda).$$
Next with $X_T=X_S$, and $M=V_S^{1/2}V_B^{-1}V_S^{1/2}$,
\begin{align*}
\tr(W_\lambda) = \tr(V_S+\lambda V_SV_B^{-1}V_S+\lambda V_S)^{-1}(V_S+\lambda V_SV_B^{-1}V_S).
\end{align*}
We place $V_S^{1/2}V_S^{-1/2}$ between these factors
and absorb them left and right.  Then we reverse
the order of the factors and repeat the process, yielding
\begin{align*}
\tr(W_\lambda) = \tr(I+\lambda M+\lambda I)^{-1}(I+\lambda M).
\end{align*}
Writing $M = U\diag(\nu_1,\dots,\nu_d)U^\tran$ for 
an orthogonal matrix $U$ and simplifying yields the result.
\end{proof}

\subsection{Proof of Theorem~\ref{thm:pmse}}\label{sec:pmse}

\begin{proof}
First 
$\e( \Vert X_{T}\hat{\beta}-X_{T}\beta\Vert^{2})  =\tr(V_{S}\e ((\hat{\beta}-\beta)(\hat{\beta}-\beta)^\tran))$.
Next using $W=W_\lambda$, we make a bias-variance decomposition,
\begin{align*}
\e\bigl( (\hat{\beta}-\beta)(\hat{\beta}-\beta)^\tran \bigr)
& =(I-W)
\gamma\gamma^\tran(I-W)^\tran+\cov(W\hat{\beta}_{S})+\cov((I-W)\hat{\beta}_{B})\\
 & =\sigma_{S}^{2}WV_{S}^{-1}W^\tran+(I-W)\Theta(I-W)^\tran,
\end{align*}
for $\Theta = \gamma\gamma^\tran + \sigma^2_BV_B^{-1}$.
Therefore
$\e\bigl( \Vert X_{S}(\hat{\beta}-\beta)\Vert^{2}\bigr) 
=\sigma_S^{2}\tr(V_SWV_S^{-1}W^\tran)+\tr(\Theta(I-W)^\tran V_S(I-W)).$

Now we introduce $\wt W=V_S^{1/2}WV_S^{-1/2}$ finding
\begin{align*}
\wt W & =V_S^{1/2}(V_B+\lambda V_S + \lambda V_B)^{-1}(V_B+\lambda V_S)V_S^{-1/2}\\
 & =(I+\lambda M+\lambda I)^{-1}(I + \lambda M)\\
 & = U \wt D U^\tran,
\end{align*}
where $\wt D = \diag((1+\lambda\nu_j)/(1+\lambda +\lambda \nu_j))$.
This allows us to write the first term of the mean squared error as
$$
\sigma^2_S\tr(V_SWV_S^{-1}W^\tran)
=\sigma^2_S\tr(\wt W\wt W^\tran)
=\sigma^2_S\sum_{j=1}^{d}\frac{(1+\lambda\nu_{j})^{2}}{(1+\lambda+\lambda v_{j})^{2}}.
$$
For the second term,
let $\wt\Theta=V_S^{1/2}\Theta V_S^{1/2}$. Then 
\begin{align*}
\tr\bigl( \Theta(I-W)^\tran V_S(I-W)\bigr)
 & =\tr(\wt\Theta(I-\wt W)^\tran(I-\wt W))\\
 & =\tr(\tilde{\Theta}U(I-\wt D)^{2}U^\tran) \\
 &
 =\lambda^2\sum_{k=1}^{d}\frac{
u_{k}^\tran V_{S}^{1/2}\Theta
V_{S}^{1/2}u_{k}}{(1+\lambda+\lambda\nu_{k})^{2}}.\ \qedhere
\end{align*}
\end{proof}

\subsection{Derivation of equation~\eqref{eq:cvk}}\label{sec:cvderive}

We suppose for simplicity that $n = rK$ for an integer
$r$, so the $K$ folds have equal size.
In that case $\bar Y_{S,-k} = (n\bar Y_S -r\bar Y_{S,k})/(n-r)$.
Now
\begin{align}\label{eq:omegacv}
\omega_\cv 
&=
\frac
{\sum_k(\bar Y_{S,-k}-\bar Y_B)(\bar Y_{S,k}-\bar Y_B)}
{\sum_k (\bar Y_{S,-k}-\bar Y_B)^2}
\end{align}
After some algebra, the numerator of~\eqref{eq:omegacv}
is
$$K(\bar Y_S-\bar Y_B)^2 -\frac{r}{n-r}\sum_{k=1}^K(\bar Y_{S,k}-\bar Y_S)^2$$
and the denominator is
$$K(\bar Y_S-\bar Y_B)^2 +
\biggl(\frac{r}{n-r}\biggr)^2\sum_{k=1}^K(\bar Y_{S,k}-\bar Y_S)^2.$$
Letting  $\hat\delta_0 = \bar Y_B-\bar Y_S$ and
$\hat\sigma^2_{S,K} = (1/K)\sum_{k=1}^K(\bar Y_{S,k}-\bar Y_S)^2$,
we have
$$
\omega_\cv 
= 
\frac
{ \hat\delta_0^2 - \hat\sigma^2_{S,K}/(K-1)}
{ \hat\delta_0^2 + \hat\sigma^2_{S,K}/(K-1)^2}.
$$

The only quantity in $\omega_\cv$ which depends on
the specific $K$-way partition used is
$\hat\sigma^2_{S,K}$.
If the groupings are chosen by sampling without
replacement, then under this sampling,
$$
\e(\hat\sigma^2_{S,K})=
\e( (\bar Y_{S,1}-\bar Y_S)^2)
=
\frac{s^2_S}r(1-1/K)
$$
using the finite population correction 
for simple random sampling,
where $s^2_S = \hat\sigma^2_S n/(n-1)$.
This simplifies to 
$$\e(\hat\sigma^2_{S,K})
=\hat\sigma^2_S\frac{n}{n-1}\frac1r\frac{K-1}K
=\hat\sigma^2_S\frac{K-1}{n-1}.
$$
Replacing $\hat\sigma^2_{S,K}$ in $\omega_{\cv}$ by its expectation
yields~\eqref{eq:cvk}.

\subsection{Proof of Theorem~\ref{thm:l1closed}}\label{sec:proofl1closed}
\begin{proof}
If $\lambda >nN|\bar Y_B-\bar Y_S|/(n+N)$ then we may
find directly that with any value of $\delta>0$
and corresponding $\mu$ given by~\eqref{eq:l1locest},
the derivative of~\eqref{eq:l1loccrit} with respect to $\delta$
is positive. Therefore $\hat\delta\le 0$ and a similar argument
gives $\hat\delta\ge0$, so that $\hat\delta=0$ and then
$\hat\mu = (n\bar Y_S+N\bar Y_B)/(n+N)$.

Now suppose that $\lambda\le \lambda_*$.
We verify that the quantities in~\eqref{eq:l1soln} 
jointly satisfy equations~\eqref{eq:l1locest}.
Substituting $\hat\delta$ from~\eqref{eq:l1soln} into
the first line of~\eqref{eq:l1locest} yields
\begin{align*}
\frac{n\bar Y_S+N(\bar Y_S+\lambda(N+n)\eta/(Nn))}{n+N}
= \bar Y_S + \frac\lambda{n}\sign(\bar Y_B-\bar Y_S),
\end{align*}
matching the value in~\eqref{eq:l1soln}.
Conversely, substituting $\hat\mu$ from~\eqref{eq:l1soln} into
the second line of~\eqref{eq:l1locest} yields
\begin{align}\label{eq:threshed}
\Theta\Bigl(\bar Y_B-\hat\mu;\frac\lambda{N}\Bigr)
&=\Theta\Bigl(\bar Y_B-\bar Y_S -\frac\lambda{n}\sign(\bar Y_B-\bar Y_S);\frac\lambda{N}\Bigr).
\end{align}
Because of the upper bound on $\lambda$, the result is
$\bar Y_B-\bar Y_S-\lambda(1/n+1/N)\sign(\bar Y_B-\bar Y_S)$
which matches the value in~\eqref{eq:l1soln}.
\end{proof}

\subsection{Derivation of equation~\eqref{eq:l1loss}}\label{sec:proofl1orcl} 

Let $f=n/(n+N)$ be the fraction of the pooled data
coming from the small sample and $F=1-f$ be the
fraction from the large sample.
Define $D=\bar Y_B-\bar Y_S$ and $c=\lambda/(nF)=\lambda(1/n+1/N)$.
Then
\begin{align*}
\hat\mu & = 
\begin{cases}
\bar Y_S + {\lambda}/n\, \sign(D), & |D| \ge c\\
f\bar Y_S + F\bar Y_B, & |D| \le c
\end{cases}\\
& = F\bar Y_B+f\bar Y_S
+((\lambda/n)\sign(D)-FD)\onedc.
\end{align*}

We replace $\bar Y_B$ and $\bar Y_S$ by linear
combinations of $D$ and another variable
$H$ chosen to be statistically independent of $D$.
Specifically,
$H=\bar Y_B + \alpha \bar Y_S$
for $\alpha = (\sigma^2_B/N)/(\sigma^2_S/n)$.
The inverse transformation is
$$
\begin{pmatrix}
\bar Y_S \\
\bar Y_B
\end{pmatrix}
=
\frac1{\alpha+1}
\begin{pmatrix}
-1 & 1\\
\alpha & 1\\
\end{pmatrix}
\begin{pmatrix}
D\\
H
\end{pmatrix}.
$$
In terms of these independent variables we have
$$
\hat\mu = 
c_0D + c_1 H + ((\lambda/n)\sign(D)-FD)\onedc
$$
where $c_0 = \alpha/(\alpha+1)-f$, and $c_1 = 1/(\alpha+1)$.

Without loss of generality, $\mu=0$.
Then $D\sim\dnorm(\delta,\sigma^2_S/n+\sigma^2_B/N)$
independently of $H\sim\dnorm(\delta,\alpha^2\sigma^2_S/n + \sigma^2_B/N)$.
After some algebra,
\begin{align}\label{eq:postalgebra}
\begin{split}
\e(\hat\mu-\mu)^2
& =
c_0^2 \e(D^2)
+c_1^2\e(H^2)
+2c_0c_1\e(D)\e(H)\\
&\quad +(\lambda/n)^2\e(\onedc) - 2c_1F\e(H)\e\bigl(D\onedc\bigr)\\
&\quad + F(F-2c_0)\e\bigl(D^2\onedc\bigr)\\
&\quad + 2c_1(\lambda/n)\e(H)\e\bigl(\sign(D)\onedc\bigr)\\
&\quad + 2(\lambda/n)(c_0-F)\e\bigl(D\,\sign(D)\onedc\bigr).
\end{split}
\end{align}

In addition to first and second moments of $D$ and $H$, we
need some expectations of functions of $D$ involving the
sign function and some indicators. They are given by
Lemma~\ref{lem:dparts} below.
Let $\varphi$ and $\Phi$ be the probabilty density
and cumulative distribution functions respectively of
the $\dnorm(0,1)$ distribution.

\begin{lemma}\label{lem:dparts}
Let $D\sim\dnorm(\delta,\tau^2)$.
Define $\eta_+ = (\delta-c)/\tau$, $\eta_- = (-\delta-c)/\tau$,
$\Phi_\pm = \Phi(\eta_\pm)$ and $\varphi_\pm = \varphi(\eta_\pm)$.
Then for $c\ge0$,
\begin{align}
\e( 1_{|D|\ge c})& = \Phi_+ +\Phi_-\label{eq:justdc}\\
\e( D1_{|D|\ge c} )
& = 
\delta(\Phi_++\Phi_-) + \tau(\varphi_+-\varphi_-) \label{eq:dc}\\
\e( D^21_{|D|\ge c} )& = 
(\delta^2+\tau^2)(\Phi_++\Phi_-)\label{eq:ddc}\\
&\quad+\tau c(\varphi_++\varphi_-)
+\tau \delta(\varphi_+-\varphi_-)\notag\\
\e( \sign(D)1_{|D|\ge c} ) & = \Phi_+-\Phi_-,\quad\text{and}
\label{eq:sdc}\\
\e( D\,\sign(D)1_{|D|\ge c} ) & = 
\delta(\Phi_+ -\Phi_-)
+\tau( \varphi_+ +\varphi_-).
\label{eq:dsdc}
\end{align}
\end{lemma}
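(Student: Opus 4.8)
The plan is to reduce all five identities to three standard truncated-Gaussian moment integrals, computed once and then assembled. The unifying device is to split the indicator as $\onedc = 1_{D\ge c}+1_{D\le -c}$, which is valid because $c\ge0$ makes the two events disjoint, and then to standardize by writing $D=\delta+\tau Z$ with $Z\sim\dnorm(0,1)$. Under this substitution the event $D\ge c$ becomes $Z\ge -\eta_+$ and $D\le -c$ becomes $Z\le\eta_-$, with $\eta_\pm$ as defined in the lemma. In particular $\e(1_{D\ge c})=\Phi(\eta_+)=\Phi_+$ and $\e(1_{D\le -c})=\Phi(\eta_-)=\Phi_-$, which immediately yields~\eqref{eq:justdc}.

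First I would record the upper-tail integrals for a standard normal, both consequences of $\varphi'(z)=-z\varphi(z)$ together with integration by parts: $\int_a^\infty z\varphi(z)\rd z=\varphi(a)$ and $\int_a^\infty z^2\varphi(z)\rd z=a\varphi(a)+\Phi(-a)$, along with their lower-tail analogues $\int_{-\infty}^b z\varphi(z)\rd z=-\varphi(b)$ and $\int_{-\infty}^b z^2\varphi(z)\rd z=-b\varphi(b)+\Phi(b)$. Expanding $D=\delta+\tau Z$ and $D^2=\delta^2+2\delta\tau Z+\tau^2 Z^2$ and applying these integrals separately over the two regions $\{Z\ge-\eta_+\}$ and $\{Z\le\eta_-\}$ produces~\eqref{eq:dc} and~\eqref{eq:ddc} once the terms are collected.

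The sign identities then require no new computation. On $\{D\ge c\}$ we have $\sign(D)=+1$ and on $\{D\le -c\}$ we have $\sign(D)=-1$, again using $c\ge0$, so $\e(\sign(D)\onedc)=\e(1_{D\ge c})-\e(1_{D\le-c})=\Phi_+-\Phi_-$, which is~\eqref{eq:sdc}. Likewise $\e(D\sign(D)\onedc)$ is simply the difference of the two single-region first moments already obtained while deriving~\eqref{eq:dc}, giving~\eqref{eq:dsdc}.

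The one place demanding care, and the main source of potential sign errors, is the second-moment identity~\eqref{eq:ddc}. Here the terms $\tau^2\eta_+\varphi_+$ and $\tau^2\eta_-\varphi_-$ coming from the $z^2$ integrals must be rewritten using $\tau\eta_+=\delta-c$ and $\tau\eta_-=-\delta-c$; the resulting $\delta$-contributions then partially cancel against the cross term $2\delta\tau Z$, leaving the clean combination $\tau c(\varphi_++\varphi_-)+\tau\delta(\varphi_+-\varphi_-)$. Because $\eta_+$ and $\eta_-$ carry $\delta$ with opposite signs, keeping the $\varphi_+-\varphi_-$ pairings distinct from the $\varphi_++\varphi_-$ pairings is the only nontrivial bookkeeping in the argument.
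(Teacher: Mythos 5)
Your proposal is correct and follows essentially the same route as the paper's proof: split $1_{|D|\ge c}$ into the two disjoint tails $1_{D\ge c}+1_{D\le -c}$, standardize $D=\delta+\tau Z$, and assemble the identities from truncated standard-normal moments (which the paper takes from Patel and Read rather than rederiving via $\varphi'(z)=-z\varphi(z)$, and which it evaluates on one tail only, handling the other tail by reflecting $D\mapsto -D$ instead of computing upper-tail integrals directly). These are only bookkeeping differences; the decomposition, the key inputs, and the final cancellation in the second-moment identity are the same.
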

\begin{proof}
Equation~\eqref{eq:justdc} is almost immediate.
For $Z\sim\dnorm(0,1)$, using
Chapter 2.5.1 of \cite{pate:read:1996} yields
$\e( Z1_{Z\le c}) = g_1(c)\equiv -\varphi(c)$
and $\e( Z^21_{Z\le c})= g_2(c)\equiv \Phi(c)-c\varphi(c)$.
For $D\sim\dnorm(\delta,\tau^2)$  we may write
$D = \delta + \tau Z$ and then
\begin{align*}
\e( D1_{D\le c}) & = g_1(c,\delta,\tau)\equiv
\delta\Phi\Bigl(\frac{c-\delta}\tau\Bigr) + \tau g_1\Bigl(\frac{c-\delta}\tau\Bigr)\\
& = \delta\Phi\Bigl(\frac{c-\delta}\tau\Bigr)
-\tau\varphi\Bigl(\frac{c-\delta}\tau\Bigr)
,\quad\text{and}\\
\e(D^21_{D\le c}) & = g_2(c,\delta,\tau)\equiv 
\delta^2\Phi\Bigl(\frac{c-\delta}\tau\Bigr) +2\delta\tau g_1\Bigl(\frac{c-\delta}\tau\Bigr)
+\tau^2g_2\Bigl(\frac{c-\delta}\tau\Bigr)\\
& = (\delta^2+\tau^2)\Phi\Bigl(\frac{c-\delta}\tau\Bigr)
-\tau (c+\delta)\varphi\Bigl(\frac{c-\delta}\tau\Bigr).
\end{align*}
For $c>0$, we write
$1_{|D|\ge c} = 1_{D\le -c} + 1_{D\ge c}=1_{D\le -c} + 1_{-D\le -c}$,
and so
$\e( D1_{|D|\ge c}) = g_1(-c,\delta,\tau) -g_1(-c,-\delta,\tau)$
which simplifies to~\eqref{eq:dc}.
Similarly
$\e(D^21_{|D|\ge c}) = g_2(-c,\delta,\tau) +g_2(-c,-\delta,\tau)$
which simplifies to~\eqref{eq:ddc}.

Equation~\eqref{eq:sdc} 
follows upon writing
$1_{|D|\ge c} = 1_{D\ge c}-1_{-D\ge c}$.
For~\eqref{eq:dsdc} that step yields
$\e(D\,\sign(D)1_{|D|\ge c})=g_1(-c,-\delta,\tau)-g_1(-c,\delta,\tau)$
\end{proof}

The formula in the article follows by making substitutions of the
quantities from Lemma~\ref{lem:dparts}
into equation~\eqref{eq:postalgebra}. It also uses the identity $c_0+c_1=F$.

\subsection{Supporting lemmas for inadmissibility}\label{sec:lemmas}

In this section we first recall Stein's Lemma.  Then we prove
two technical lemmas used in the proof of Theorem~\ref{thm:plug-in}.

\begin{lemma}\label{lem:stein}
Let $Z\sim\dnorm(0,1)$ and let $g:\real\to\real$
be an indefinite integral of the Lebesgue measurable function
$g'$, essentially the derivative of $g$. If $\e(|g'(Z)|)<\infty$
then
$$
\e( g'(Z) ) = \e( Zg(Z)).
$$
\end{lemma}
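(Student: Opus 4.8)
The plan is to prove the identity directly from the defining property of the standard normal density $\varphi(z)=(2\pi)^{-1/2}e^{-z^2/2}$, namely $\varphi'(z)=-z\varphi(z)$, so that the weight $z\varphi(z)$ appearing in $\e(Zg(Z))$ is exactly $-\varphi'(z)$. Formally this suggests integration by parts, $\int g(z)(-\varphi'(z))\,dz=\int g'(z)\varphi(z)\,dz$ with vanishing boundary terms. However, since $g$ is only assumed to be absolutely continuous, being an indefinite integral of $g'$, I would instead organize the argument through Fubini's theorem, which sidesteps any need to control the growth of $g$ at $\pm\infty$ directly.

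First I would write $g(z)=g(0)+\int_0^z g'(s)\,ds$ and substitute this into $\e(Zg(Z))=\int z g(z)\varphi(z)\,dz$. The constant part contributes $g(0)\int z\varphi(z)\,dz=0$ by the oddness of $z\varphi(z)$, so only the term involving $g'$ survives. Next I would split the remaining integral at $z=0$ and apply Fubini to each piece. On the region $z>0$ the domain $\{0<s<z\}$ gives $\int_0^\infty g'(s)\bigl(\int_s^\infty z\varphi(z)\,dz\bigr)\,ds$, and the inner integral equals $\varphi(s)$ since $-\varphi$ is an antiderivative of $z\varphi(z)$. On the region $z<0$, writing $\int_0^z g'(s)\,ds=-\int_z^0 g'(s)\,ds$ and using the domain $\{z<s<0\}$ yields $\int_{-\infty}^0 g'(s)\varphi(s)\,ds$, because the inner integral is now $\int_{-\infty}^s z\varphi(z)\,dz=-\varphi(s)$. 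Summing the two pieces recovers $\int_{-\infty}^\infty g'(s)\varphi(s)\,ds=\e(g'(Z))$, which is the claim.

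The one step needing care, and the place where the hypothesis enters, is the justification of Fubini. I expect this to be the main obstacle, since without an integrability bound the interchange of order of integration is not licensed. The absolute version of the double integral on the positive side is bounded by $\int_0^\infty |g'(s)|\bigl(\int_s^\infty z\varphi(z)\,dz\bigr)\,ds=\int_0^\infty |g'(s)|\varphi(s)\,ds\le\e(|g'(Z)|)<\infty$, and the negative side is handled identically. Thus the assumption $\e(|g'(Z)|)<\infty$ is precisely what permits the interchange, after which the two elementary evaluations of the Gaussian tail integral complete the proof.
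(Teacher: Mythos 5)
Your proof is correct, but there is nothing internal to compare it against: the paper does not prove Lemma~\ref{lem:stein} at all, its entire proof being the citation to \cite{stei:1981}. What you have reconstructed is the classical argument, essentially Stein's original one: represent $g(z)=g(0)+\int_0^z g'(s)\rd s$, kill the $g(0)$ term by oddness of $z\varphi(z)$, and convert $\e(Zg(Z))$ into $\e(g'(Z))$ by Fubini, using the Gaussian tail identities $\int_s^\infty z\varphi(z)\rd z=\varphi(s)$ and $\int_{-\infty}^s z\varphi(z)\rd z=-\varphi(s)$; your sign bookkeeping on the half-line $z<0$ is right, and the hypothesis $\e(|g'(Z)|)<\infty$ enters exactly where it must, as the Tonelli bound licensing the interchange. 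One refinement is worth making explicit: the asserted equality tacitly requires $\e(Zg(Z))$ to be well defined, which is not among the lemma's hypotheses; but your Tonelli computation already delivers this, since it shows $\int |z|\,|g(z)-g(0)|\,\varphi(z)\rd z\le\e(|g'(Z)|)<\infty$, which together with $|g(0)|\,\e(|Z|)<\infty$ gives integrability of $Zg(Z)$ — so state that bound \emph{before} splitting $\e(Zg(Z))$ into its two pieces, rather than after, and the decomposition is legitimate from the start. As for what each approach buys: the paper's citation is economical, while your self-contained derivation makes the appendix standalone and exposes precisely where integrability is used, namely the same interchange-of-integrals mechanism that drives Lemma~\ref{lem:pmse} and hence the inadmissibility result of Theorem~\ref{thm:plug-in}.
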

\begin{proof}
\cite{stei:1981}.
\end{proof}

\begin{lemma}\label{lem:pmse}
Let $\eta\sim\dnorm(0,I_{d})$, $b\in \real^{d}$,
and let $A>0$ and $B>0$ be constants. 
Let 
$$ Z = \eta+\frac{A(b-\eta)}{\Vert b-\eta\Vert^2+B}.$$
Then
\begin{align*}
\e(\Vert Z\Vert^2) 
& <
d+\e\left(\frac{A(A+4-2d)}{\Vert b-\eta\Vert^2+B}\right).
\end{align*}
\end{lemma}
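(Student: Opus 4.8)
The plan is to write $Z = \eta + g(\eta)$ with $g(\eta) = A(b-\eta)/(\Vert b-\eta\Vert^2+B)$ and to evaluate $\e(\Vert Z\Vert^2)$ by expanding the square and applying Stein's lemma to the cross term. First I would expand
$$\Vert Z\Vert^2 = \Vert\eta\Vert^2 + 2\eta^\tran g(\eta) + \Vert g(\eta)\Vert^2,$$
take expectations, and use $\e(\Vert\eta\Vert^2)=d$. This reduces matters to computing $\e(\eta^\tran g(\eta))$ and $\e(\Vert g(\eta)\Vert^2)$.

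For the cross term I would invoke the multivariate form of Stein's identity, applying Lemma~\ref{lem:stein} one coordinate at a time, to obtain $\e(\eta^\tran g(\eta)) = \e(\sum_{i=1}^d \partial g_i/\partial\eta_i)$. Writing $Q=\Vert b-\eta\Vert^2+B$ and differentiating $g_i = A(b_i-\eta_i)/Q$ gives $\partial g_i/\partial\eta_i = A(-1/Q + 2(b_i-\eta_i)^2/Q^2)$, so that, after summing and substituting $\Vert b-\eta\Vert^2 = Q-B$,
$$\sum_{i=1}^d \frac{\partial g_i}{\partial\eta_i} = A\Bigl(\frac{2-d}{Q}-\frac{2B}{Q^2}\Bigr).$$
For the remaining term, $\Vert g(\eta)\Vert^2 = A^2\Vert b-\eta\Vert^2/Q^2 = A^2/Q - A^2B/Q^2$.

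Combining the three pieces, the coefficient of $\e(1/Q)$ collapses to $2A(2-d)+A^2 = A(A+4-2d)$ while the coefficient of $\e(1/Q^2)$ is $-4AB-A^2B = -AB(A+4)$, giving
$$\e(\Vert Z\Vert^2) = d + A(A+4-2d)\,\e\Bigl(\tfrac1Q\Bigr) - AB(A+4)\,\e\Bigl(\tfrac1{Q^2}\Bigr).$$
Since $A>0$ and $B>0$, the final term is strictly negative, and dropping it yields the claimed strict inequality (the sign of $A+4-2d$ is irrelevant to the argument). I expect the only delicate point to be verifying the hypothesis of Lemma~\ref{lem:stein}: because $Q\ge B>0$ the denominators never vanish, and each $g_i$ together with its partial derivative is a bounded function of $\eta$ with the required integrability, so Stein's identity applies coordinatewise. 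That verification, while routine, is where care is needed.
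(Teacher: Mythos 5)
Your proof is correct and follows essentially the same route as the paper's: expand $\Vert Z\Vert^2$, apply Stein's lemma coordinatewise to the cross term, and collect terms to get $\e(\Vert Z\Vert^2)=d+A(A+4-2d)\,\e\bigl(1/Q\bigr)-AB(A+4)\,\e\bigl(1/Q^2\bigr)$ with $Q=\Vert b-\eta\Vert^2+B$, after which the strict inequality follows by dropping the negative last term. The paper performs the identical computation (its coefficients $4A+A^2-2Ad$ and $(4A+A^2)B$ match yours), so there is nothing to flag beyond noting that your explicit check of the integrability hypothesis for Stein's lemma is a sensible addition the paper leaves implicit.
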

\begin{proof}
First,
\begin{align*}
\e(\Vert Z\Vert^2)
& = d
+\e\left(\frac{A^{2}\Vert b-\eta\Vert^2}{(\Vert b-\eta\Vert^2+B)^{2}}\right)
+2A\sum_{k=1}^{d}\e\left(\frac{\eta_{k}(b_{k}-\eta_{k})}{\Vert b-\eta\Vert^2+B}\right).
\end{align*}
Now define
$$g(\eta_{k})
=\frac{b_{k}-\eta_{k}}{\Vert b-\eta\Vert^2+B}
=\frac{b_{k}-\eta_{k}}{(b_k-\eta_k)^2+\Vert b_{-k}-\eta_{-k}\Vert^2+B}.
$$ 
By Stein's lemma (Lemma~\ref{lem:stein}), we have
\begin{align*}
\e\left(\frac{\eta_{k}(b_{k}-\eta_{k})}{\Vert b-\eta\Vert^2+B}\right) & = 
\e(g'(\eta_k))=
\e\left(\frac{2(b_{k}-\eta_{k})^{2}}{(\Vert b-\eta\Vert^2+B)^{2}}-\frac{1}{\Vert b-\eta\Vert^2+B}\right)
\end{align*}
and thus
\begin{align*}
\e( \Vert Z\Vert^2)
& =d
+\e\left(\frac{(4A+A^{2})\Vert b-\eta\Vert^2}{(\Vert b-\eta\Vert^2+B)^{2}}
        -\frac{2Ad}{\Vert b-\eta\Vert^2+B}\right)\\
& =d
+\e\left(\frac{(4A+A^{2}-2Ad) }{\Vert b-\eta\Vert^2+B}
        -\frac{(4A+A^2)B}{(\Vert b-\eta\Vert^2+B)^2}\right),
\end{align*}
after collecting terms.
\end{proof}

\begin{lemma}\label{lem:chisq}
For integer $m\ge 1$, let $Q\sim\chi_{(m)}^{2}$, $C>1$, $D>0$
and put $$Z=\frac{Q(C-m^{-1}Q)}{Q+D}.$$
Then
\begin{align*}
\e(Z)
& \geq 
\frac{(C-1)m-2}{m+2+D}.
\end{align*}
and so $\e(Z)>0$ whenever $C>1+{2}/{m}$.
\end{lemma}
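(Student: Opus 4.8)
The plan is to collapse the entire expectation onto a single scalar quantity and then bound that quantity by Cauchy--Schwarz. First I would simplify $\e(Z)$. Writing $Z = \frac{CQ}{Q+D} - \frac1m\frac{Q^2}{Q+D}$ and using the algebraic identity $\frac{Q^2}{Q+D} = Q - \frac{DQ}{Q+D}$ together with $\e(Q)=m$, both fractions reduce to the same quantity $P \equiv \e\bigl[Q/(Q+D)\bigr]$. Indeed $\e\bigl[Q^2/(Q+D)\bigr] = m - DP$, so
\[
\e(Z) = CP - \frac1m(m - DP) = \Bigl(C+\frac Dm\Bigr)P - 1.
\]
Since $C + D/m = (Cm+D)/m > 0$, proving the lemma becomes exactly equivalent to the bound $P \ge m/(m+2+D)$: substituting this value gives $\e(Z) \ge \frac{Cm+D}{m+2+D} - 1 = \frac{(C-1)m-2}{m+2+D}$, and the concluding claim that $\e(Z)>0$ whenever $C>1+2/m$ then follows immediately from the sign of the numerator.

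The main obstacle is establishing $P = \e\bigl[Q/(Q+D)\bigr] \ge m/(m+2+D)$, and the subtlety is that this is a \emph{lower} bound on the expectation of $q\mapsto q/(q+D)$, a concave increasing function; Jensen's inequality points the wrong way, yielding only the useless $P \le m/(m+D)$. The fix I would use is Cauchy--Schwarz applied to the factorization $Q = \sqrt{Q/(Q+D)}\,\sqrt{Q(Q+D)}$, valid because $Q>0$ almost surely. This gives
\[
m^2 = \e(Q)^2 \le \e\Bigl[\frac{Q}{Q+D}\Bigr]\,\e\bigl[Q(Q+D)\bigr] = P\bigl(\e(Q^2)+D\,\e(Q)\bigr).
\]
Since $Q\sim\chi^2_{(m)}$ has $\e(Q^2)=m(m+2)$ and $\e(Q)=m$, the right-hand factor equals $m(m+2+D)$, so $P \ge m^2/[m(m+2+D)] = m/(m+2+D)$, as needed. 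This argument requires only $Q>0$ (so the square roots are legitimate) and $\e(Q^2)<\infty$, hence it holds for every integer $m\ge1$, matching the stated hypothesis.

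Combining the reduction with the Cauchy--Schwarz bound yields the claimed inequality. I expect spotting the right Cauchy--Schwarz split to be the only nonroutine step; everything else is bookkeeping with the first two moments of a chi-square variable, so I would keep those computations terse in the final write-up.
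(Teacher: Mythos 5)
Your proof is correct, but it takes a genuinely different route from the paper's. The paper works directly with the integral: it uses the density identity $x\,p_m(x)=m\,p_{m+2}(x)$ to rewrite $\e(Z)=m\int_0^\infty \frac{C-x/m}{x+D}\,p_{m+2}(x)\rd x$, and then applies Jensen's inequality to the convex integrand (affine plus a positive multiple of $1/(x+D)$) at the mean $m+2$ of the $\chi^2_{(m+2)}$ law. You instead reduce everything algebraically to the single scalar $P=\e\bigl[Q/(Q+D)\bigr]$ via $\e(Z)=(C+D/m)P-1$, and obtain the lower bound $P\ge m/(m+2+D)$ by Cauchy--Schwarz with the split $Q=\sqrt{Q/(Q+D)}\,\sqrt{Q(Q+D)}$. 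In substance the two arguments hinge on the same inequality --- the paper's Jensen step, unwound, is exactly $P\ge m/(m+2+D)$, with the $m+2$ arising as the mean of $\chi^2_{(m+2)}$ rather than from $\e(Q^2)=m(m+2)$ --- but the routes are distinct and each has its merits. Your version is more elementary and more general: it never touches the chi-square density, and it applies verbatim to any positive random variable with $\e(Q)=m$ and $\e(Q^2)=m(m+2)$, whereas the paper's density-shift trick is specific to the gamma family. The paper's version, in exchange, is a three-line computation once the identity $x\,p_m(x)=m\,p_{m+2}(x)$ is recognized, and makes transparent why the shift from $m$ to $m+2$ appears in the denominator. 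One small point to keep in your write-up: you correctly flag that multiplying the bound on $P$ by $C+D/m$ requires $C+D/m>0$, which holds since $C>1$ and $D>0$; that sign check is the only place the hypothesis on $C$ enters before the final step, so it should stay explicit.
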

\begin{proof}
The $\chi^2_{(m)}$ density function is
$p_{m}(x)= ({2^{m/2-1}\Gamma(\frac{m}{2})})^{-1}x^{m/2-1}e^{-x/2}$.
Thus
\begin{align*}
\e(Z)
& = \frac{1}{2^{m/2}\Gamma(\frac{m}{2})}\int_{0}^{\infty}\frac{x(C-m^{-1}x)}{x+D}x^{m/2-1}e^{-x/2}\rd x\\
 & = m\int_{0}^{\infty}\frac{C-m^{-1}x}{x+D}p_{m+2}(x)\rd x\\
 & \ge
m\frac{C-(m+2)/m}{m+2+D}
\end{align*}
by Jensen's inequality.
\end{proof}

\subsection{Proof of Theorem~\ref{thm:plug-in}}\label{sec:proofplugin}

We prove this first for $\hat\omega_{\plug,h}=\hat\omega_{\plug}$,
that is, taking $h(\hat\sigma^2_B)=d\hat\sigma^2_B/n$.
We also assume at first that $V_B=N\Sigma$ but
remove the assumption later.

Note that $\hat\beta_{S}=\beta+(X_{S}^{\tran}X_{S})^{-1}X_{S}^{\tran}\err_{S}$
and $\hat\beta_{B}=\beta+\gamma+(X_{B}^{\tran}X_{B})^{-1}X_{B}^{\tran}\err_{B}$.
It is convenient to define
\begin{align*}
\eta_{S} & = \Sigma^{1/2}(X_{S}^{\tran}X_{S})^{-1}X_{S}^{\tran}\err_{S}
\quad\text{and}\quad
\eta_{B}  = \Sigma^{1/2}(X_{B}^{\tran}X_{B})^{-1}X_{B}^{\tran}\err_{B}.
\end{align*}
Then we can rewrite $\hat\beta_{S}=\beta+\Sigma^{-1/2}\eta_{S}$
and $\hat\beta_{B}=\beta+\gamma+\Sigma^{-1/2}\eta_{B}$. 
Similarly, we let
\begin{align*}
\hat{\sigma}_{S}^{2} & = \frac{\Vert Y_S-X_S\hat\beta_S\Vert^2}{n-d}
\quad\text{and}\quad
\hat{\sigma}_{B}^{2} = \frac{\Vert Y_B-X_B\hat\beta_B\Vert^2}{N-d}.
\end{align*}
Now
$(\eta_{S},\eta_{B},\hat{\sigma}_{S}^{2},\hat{\sigma}_{B}^{2})$ are
mutually independent, with
\begin{align*}
\eta_{S} & \sim \mathcal{N}\Bigl(0, \frac{\sigma_{S}^{2}}nI_{d}\Bigr),
& \eta_{B} & \sim \mathcal{N}\Bigl(0,\frac{\sigma_B^2}NI_{d}\Bigr),\\
\hat{\sigma}_{S}^{2} & \sim \frac{\sigma_{S}^{2}}{n-d}\chi_{(n-d)}^{2},\quad\text{and}
&\hat{\sigma}_{B}^{2} & \sim \frac{\sigma_{B}^{2}}{N-d}\chi_{(N-d)}^{2}.
\end{align*}

We easily find that $\e( \Vert X\hat\beta_S-X\beta\Vert^2) = d\sigma^2_S/n$.
Next we find $\hat\omega$ and a bound on 
$\e( \Vert X\hat\beta(\hat\omega)-X\beta\Vert^2) $.

Let $\gamma^{*}=\Sigma^{1/2}\gamma$ 
so that $\hat{\gamma}=\hat\beta_B-\hat\beta_S=
\Sigma^{-1/2}(\gamma^{*}+\eta_{B}-\eta_{S})$. Then
\begin{align*}
\hat\omega = \hat{\omega}_\plug & =
\frac{\hat{\gamma}^{\tran}\Sigma\hat{\gamma}+d\hat{\sigma}_{B}^{2}/N}
{\hat{\gamma}^{\tran}\Sigma\hat{\gamma}+d\hat{\sigma}_{B}^{2}/N
+d\hat{\sigma}_{S}^{2}/n}\\
&= \frac{\Vert\gamma^{*}+\eta_{B}-\eta_{S}\Vert^2+d\hat{\sigma}_{B}^{2}/N}
{\Vert\gamma^{*}+\eta_{B}-\eta_{S}\Vert^2+d(\hat{\sigma}_{B}^{2}/N+\hat{\sigma}_{S}^{2}/n)}.
\end{align*}
Now we can express the mean squared error as
\begin{align*}
 \e(\Vert X\hat\beta(\hat\omega)-X\beta\Vert^2)
 & = \e(\Vert X\Sigma^{-1/2}(\hat{\omega}\eta_{S}+(1-\hat{\omega})(\gamma^*+\eta_{B}))\Vert^2)\\
 & = \e(\Vert\hat\omega\eta_{S}+(1-\hat{\omega})(\gamma^{*}+\eta_{B})\Vert^2)\\
 & = \e(\Vert\eta_{S}+(1-\hat{\omega})(\gamma^{*}+\eta_{B}-\eta_{S})\Vert^2)\\
 & = \e\biggl(\Bigl\Vert\eta_{S}+\frac{
(\gamma^{*}+\eta_{B}-\eta_{S})d\hat{\sigma}_{S}^{2}/n}
{\Vert\gamma^{*}+\eta_{B}-\eta_{S}\Vert^2+d(\hat{\sigma}_{B}^{2}/N+\hat{\sigma}_{S}^{2}/n)}
\Bigr\Vert^2\biggr).
\end{align*}

To simplify the expression for mean squared error we  introduce
\begin{align*}
Q &= m\hat\sigma^2_S/\sigma^2_S\sim\chi^2_{(m)}\\
\eta_{S}^{*} &=\sqrt{n}\,\eta_{S}/\sigma_S\sim\mathcal{N}(0,I_{d}),\\
b&=\sqrt{n}
(\gamma^{*}+\eta_{B})/\sigma_{S},\\
A&=d\hat{\sigma}_{S}^{2}/\sigma_{S}^{2}=dQ/m,\quad\text{and}\\
B&=nd(\hat{\sigma}_{B}^{2}/N+\hat{\sigma}_{S}^{2}/n)/\sigma_{S}^{2}\\
&=d( (n/N)\hat{\sigma}_{B}^{2}/\sigma_{S}^{2}+Q/m).
\end{align*}

The quantities $A$ and $B$ are, after conditioning, the
constants that appear in technical Lemma~\ref{lem:pmse}.
Similarly $C$ and $D$ introduced below match
the constants used in Lemma~\ref{lem:chisq}.

With these substitutions and some algebra,
\begin{align*}
 \e(\Vert X\hat\beta(\hat\omega)-X\beta\Vert^2)
 & = 
\frac{\sigma_{S}^{2}}n\,
\e\left(
\left\Vert\eta_{S}^{*}+\frac{A(b-\eta_{S}^{*})}{\Vert b-\eta_{S}^{*}\Vert^2+B}\right\Vert^2\right)\\
 & = \frac{\sigma_{S}^{2}}n\,
\e\left(\e\left(
\left\Vert\eta_{S}^{*}+\frac{A(b-\eta_{S}^{*})}{\Vert b-\eta_{S}^{*}\Vert^2+B}\right\Vert^2
\Bigm|
\eta_{B},\hat{\sigma}_{S}^{2},\hat{\sigma}_{B}^{2}\right)\right).
\end{align*}
We now apply the two technical lemmas from Section~\ref{sec:lemmas}.

Since $\eta_{S}^{*}$ is independent of $(b,A,B)$ and $Q\sim\chi_{(m)}^{2}$,
by Lemma \ref{lem:pmse}, we have
\begin{align*}
  \e\left(\left\Vert\eta_{S}^{*}+\frac{A(b-\eta_{S}^{*})}{\Vert b-\eta_{S}^{*}\Vert^2+B}\right\Vert^2
\Bigm|\eta_{B},\hat{\sigma}_S^{2},\hat\sigma^2_B\right) 
&<d+\e\left(\frac{A(A+4-2d)}{\Vert b-\eta^*_S\Vert^2+B}\Bigm| \eta_B,\hat\sigma_S^2,\hat\sigma_B^2\right).
\end{align*}
Hence
\begin{align}
\Delta&\equiv\e(\Vert X\hat\beta_{S}-X\beta\Vert^2)
-\e(\Vert X\hat\beta(\hat\omega)-X\beta\Vert^2)\notag \\
&>\frac{\sigma^2_S}{n}
\e\left(\frac{A(2d-A-4)}{\Vert b-\eta^*_S\Vert^2+B}\right)\notag\\
&=\frac{d\sigma^2_S}{n}
\e\left(\frac{Q(2-Q/m-4/d)}{\Vert b-\eta^*_S\Vert^2m/d+(B-A)m/d+Q}\right)\notag\\
&=\frac{d\sigma^2_S}{n}\e\left(\frac{Q(C-Q/m)}{Q+D}\right)\label{eq:chisqquantity}
\end{align}
where 
$C=2-4/d$ and
$D=(m/d)(\Vert b-\eta^*_S\Vert^2+dnN^{-1}\hat\sigma^2_B/\sigma^2_S)$.

Now suppose that $d\ge 5$.
Then $C\ge 2-4/5>1$
and so conditionally on $\eta_S$, $\eta_B$, and $\hat\sigma^2_B$,
the requirements of Lemma~\ref{lem:chisq}
are satisfied by $C$, $D$ and $Q$.
Therefore
\begin{align}\label{eq:almostthere}
\Delta &\ge 
\frac{d\sigma^2_S}{n} \e\left(
\frac{m(1-4/d) -2}{m+2+D}\right)
\end{align}
where the randomness in~\eqref{eq:almostthere} is
only through $D$ which depends on $\eta^*_S$, $\eta_B$ (through $b$) and 
$\hat\sigma^2_B$.
By Jensen's inequality
\begin{align}\label{eq:jenn}
\Delta & >
\frac{d\sigma^2_S}{n}
\frac{m(1-4/d) -2}{m+2+\e(D)} \ge 0
\end{align}
whenever $m(1-4/d) \ge 2$.
The first inequality  in~\eqref{eq:jenn}
is strict because $\var(D)>0$.
Therefore $\Delta>0$.
The condition on $m$ and $d$ holds for any $m\ge 10$ when $d\ge 5$.

For the general plug-in $\hat\omega_{\plug,h}$ we replace
$d\hat\sigma^2_B/N$ above by $h(\hat\sigma^2_B)$.
This quantity depends on $\hat\sigma^2_B$ and is
independent of $\hat\sigma^2_S$, $\eta_B$ and $\eta_S$.
It appears within $B$ where we need it to be non-negative
in order to apply Lemma~\ref{lem:pmse}. It also appears
within $D$ which becomes
$(m/d)(\Vert b-\eta^*_S\Vert^2+nh(\hat\sigma^2_B)/\sigma^2_S)$.
Even when we take $\var(h(\hat\sigma^2_B))=0$
we still get $\var(D)>0$ and so
the first inequality in~\eqref{eq:jenn}
is still strict.  

Now suppose that $V_B\ne N\Sigma$.
The distributions of $\eta_S$, $\hat\sigma^2_S$ and
$\hat\sigma^2_B$ remain unchanged but now
$$\eta_B\sim\dnorm\Bigl(0, \Sigma^{1/2}V_B^{-1}\Sigma^{1/2}\sigma^2_B\Bigr)$$
independently of the others. The changed distribution of $\eta_B$ does
not affect the application of Lemma~\ref{lem:pmse} because that
lemma is invoked conditionally on $\eta_B$. Similarly,
Lemma~\ref{lem:chisq} is applied conditionally on $\eta_B$.
The changed distribution of $\eta_B$ changes the distribution of $D$
but we can still apply~\eqref{eq:jenn}.
$\Box$
\bigskip

The expectation at~\eqref{eq:chisqquantity} is negative
when $d=4$, 
as can be verified by a one dimensional quadrature.
For this reason, the inadmissibilty result requires
$d>4$.


\subsection{Proof of Theorem~\ref{thm:bestw}}\label{sec:matoracle}

Recall that $\hat\beta(W) = W\hat\beta_S + (I-W)\hat\beta_B$.
The first two  moments of $\hat\beta(W)$ are
$\e(\hat\beta(W))=\beta + (I-W)\gamma$, and 
$$\var(\hat\beta(W))=\sigma^2_SWV_SW^\tran
+\sigma^2_B(I-W)V_B(I-W)^\tran.$$
The loss is $L(W) = \e( (\hat\beta(W)-\beta)^\tran V_T (\hat\beta(W)-\beta) )$
and
\begin{equation*}
\begin{split}
L(W)
&= 
\tr(\gamma\gamma^\tran V_T) +\tr( WV_TW^\tran\gamma\gamma^\tran)
 -2\tr(  WV_T\gamma\gamma^\tran  )
\\
&\phantom{=}+ \sigma^2_S\tr(WV_SW^\tran V_T)
+ \sigma^2_B\tr(V_B)
+ \sigma^2_B\tr( WV_BW^\tran V_T )
-2\sigma^2_B\tr(WV_BV_T).
\end{split}
\end{equation*}

We will use two
rules from \cite{broo:2011} for matrix differentials.
If $A$, $B$ and $C$ don't depend on the matrix $X$
then the differential of $\tr(XA)$ and $\tr(AX)$ are both $AdX$,
and, the differential
of $\tr(AXBX^\tran C)$ is $BX^\tran CA+B^\tran X^\tran A^\tran C^\tran$
times $dX$.

The differential of $L(W)$ when $W$ changes is
\begin{align*}
&V_TW^\tran \gamma\gamma^\tran+V_T^\tran W^\tran \gamma\gamma^\tran
-2V_T\gamma\gamma^\tran\\
&+\sigma^2_S
\bigl(V_SW^\tran V_T+V_S^\tran W^\tran V_T^\tran\bigr)\\
&+\sigma^2_B
\bigl(V_BW^\tran V_T+V_B^\tran W^\tran V_T^\tran\bigr)
-2\sigma^2_B V_BV_T
\end{align*}
times $dW$.
Let $W^*$ be the hypothesized optimal matrix given
at~\eqref{eq:bestw}. It is symmetric, as are $V_S$, $V_B$
and $V_T$. We may therefore write the differential at
that matrix as
$$
2(GW^* -G+\sigma_S^2V_SW^* + \sigma_B^2V_BW^*-\sigma^2_BV_B)V_T
$$
where $G=\gamma\gamma^\tran$. 
This differential vanishes, showing that $W^*$ satisfyies
first order conditions.
The differential of this differential is
$2(G+\sigma_S^2V_S + \sigma_B^2V_BW)V_T$
which is positive definite, and so $W^*$ must be a minimum.
$\Box$


\end{document}